\RequirePackage[l2tabu,orthodox]{nag}
\documentclass
[11pt,letterpaper]
{article} 

\usepackage[utf8]{inputenc}
\usepackage{amsmath}
\usepackage{amsthm}
\usepackage{amsfonts}
\usepackage{amssymb}
\usepackage{pgfplots}
\usepackage{systeme}
\usepackage{pgf,tikz}
\usepackage{mathrsfs}
\usepackage{extarrows}
\usetikzlibrary{arrows}
\usepackage{tikz-3dplot}
\usepackage{pgfplots}
\pgfplotsset{compat=1.8}
\usepackage[parfill]{parskip}
\usepackage{xcolor}
\usepackage{bm}
\usepackage{turnstile}

\usepackage{hyperref}
\usepackage[capitalise,nameinlink]{cleveref}
\crefname{algocf}{Algorithm}{Algorithms}
\Crefname{algocf}{Algorithm}{Algorithms}

\usepackage[
letterpaper,
top=1.2in,
bottom=1.2in,
left=1in,
right=1in]{geometry}

\DeclareMathOperator*{\argmax}{\arg\!\max}

\newcommand{\matbeg}{\left(\begin{array}}
\newcommand{\matend}{\end{array}\right)}

\usepackage{eurosym}
\DeclareRobustCommand{\officialeuro}{%
  \ifmmode\expandafter\text\fi
  
{\fontencoding{U}\fontfamily{eurosym}\selectfont e}}

\usepackage{mdframed}

\providecommand*{\sststile}[2]{\vdash^{\raise{4mu}{\mkern-8mu #2}}_{\raise{-4mu}{\mkern-8mu #1}}}

\providecommand*{\R}{\mathbb R}

\renewcommand*{\deg}{\mathrm{deg}}

\providecommand*{\Tr}{\mathop{\mathrm{Tr}}}
\providecommand*{\diag}{\mathop{\mathrm{diag}}}
\providecommand*{\E}{\mathop{\mathbb{E}}}

\providecommand*{\bm}[1]{\boldsymbol{#1}}

\renewcommand{\Pr}{\mathbb P}

\renewcommand{\epsilon}{\varepsilon}

\newtheorem{theorem}{Theorem}[section]

\newtheorem{lemma}[theorem]{Lemma}
\newtheorem{corollary}[theorem]{Corollary}
\newtheorem{claim}[theorem]{Claim}
\newtheorem{remark}[theorem]{Remark}
\newtheorem{fact}[theorem]{Fact}
\newtheorem{proposition}[theorem]{Proposition}

\newcommand{\one}{\mathbf 1}
\newcommand{\X}{X_\ell}

\usepackage{fancyhdr}

\begin{document}

\pagestyle{plain}
\setcounter{page}{1}

\title{Kikuchi Graphs of Random Hypergraphs are Approximately Johnson}
\author{Pravesh K. Kothari \thanks{Princeton University}}
\maketitle
\begin{abstract}
We prove that level-$\ell$ Kikuchi graphs of random $2r$-uniform hypergraphs spectrally approximate Kikuchi graph of the complete $2r$-uniform hypergraph at a sampling rate that is sharp up to a logarithmic factor, in the regime $r\leq \ell \leq n/2$. Our proof is based on the matrix Bernstein inequality, but, unlike prior works, we apply it to an appropriate collection of blocks of Johnson eigenspaces. Our analysis relies on a new, simple band-locality property for arbitrary Kikuchi graphs. As an application, we prove that the natural degree-$2\ell$ sum-of-squares relaxation for the Max $2r$-XOR problem is ``integral'' when the input is a planted noisy $2r$-XOR instance on a random hypergraph with $\gtrsim n \cdot (n/\ell)^{r-1} \log n$ hyperedges. 
\end{abstract} 

\newpage
\newcommand{\ba}{\bar{\alpha}}
\newcommand{\bb}{\bar{\beta}}
\newcommand{\bx}{\bar{s}}
\newcommand{\bz}{\bar{z}}
\newcommand{\by}{\bar{t}}
\newcommand{\1}{\bm{1}}
\newcommand{\Hf}{H_{\mathsf{full}}}
\newcommand{\Span}{\mathsf{span}}

\section{Introduction}
Let $H$ be a $2r$-uniform hypergraph on $[n]$ for a positive integer $r$. The level $\ell$-Kikuchi graph of $H$, denoted by $K^{(\ell)}_H$, is the graph on ${[n] \choose \ell}$ where two vertices $S,T$ are connected iff $S \Delta T \in H$. 

Kikuchi graphs were introduced independently in closely related forms in~\cite{WeinAM19,Hastings20TensorPCA}, and various variants have since then been used in the proof of several recent results in coding theory and combinatorics~\cite{GuruswamiKM22,HsiehKM23,GHKM23,AlrabiahGKM22,KothariManohar24LinearLCC,KothariManohar24SmoothLCC,HsiehKothariMohantyMunhaSudakov24,Yankovitz24Linear3LCC,BasuHsiehKothariLin25OddLDC}. When $H$ is a graph (i.e., $r=1$), Kikuchi graphs (sometimes called \emph{token graphs}) also appear in combinatorics with applications to quantum constraint satisfaction~\cite{FabilaMonroyFFHUW12,ApteParekhSud25,BakshiBasuKothariLi26,Lew26ApproxBrouwer,DalfoDFFHTZ21,Lew24TokenGraphs,DalfoFM25,ReyesDalfoFiolMessegue23}.

We prove that Kikuchi graphs of random hypergraphs spectrally approximate, or \emph{sparsify}, the Kikuchi graphs of the complete hypergraph. To describe our result more precisely, let us introduce some setup and notation. We write $\gtrsim_r$ and $\lesssim_r$ to denote inequalities that hold up to multiplicative constants depending only on $r$ (without the subscript, the symbols denote inequalities up to an absolute multiplicative constant slack). We say that $H$ is a random hypergraph with \emph{density} $p$ if every hyperedge in ${n \choose 2r}$ is included in $H$ with probability $p$ independently. We write $\Hf$ to denote the \emph{complete} $2r$-uniform hypergraph on $[n]$. Note that $K^{(\ell)}_{\Hf}$ is a graph from the \emph{Johnson scheme} (in this paper, we will call it the Johnson graph, in short) where an edge $\{S,T\}$ exists whenever $|S \Delta T| = 2r$ or, equivalently, $|S \cap T| = \ell - r$. Let $L_H^{(\ell)}$ be the Laplacian of $K^{(\ell)}_H$ and $L_{\Hf}^{(\ell)}$ be the Laplacian of $K^{(\ell)}_{\Hf}$, respectively.

\begin{theorem}[Random Hypergraphs are Kikuchi Sparsifiers] \label{thm:main}
Let $r \leq \ell \leq n/2$, let $0<\epsilon\leq 1/2$, and let $H \subseteq {[n] \choose 2r}$ be a random $2r$-uniform hypergraph with density $p$ satisfying $p {n \choose 2r} \gtrsim_{r} (n/\epsilon^2) (n/\ell)^{r-1} \log n$. Then, with probability at least $0.99$ over the draw of $H$, 
\[
(1-\epsilon) p L_{\Hf} \preceq L_{H} \preceq (1+\epsilon) p L_{\Hf}.
\]
\end{theorem}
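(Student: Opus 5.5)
We will prove \cref{thm:main} by writing $L_H - pL_{\Hf} = \sum_{C \in {[n]\choose 2r}} (x_C - p) L_C$, where $x_C\in\{0,1\}$ is the indicator that $C \in H$ and $L_C$ is the Laplacian of the Kikuchi graph of the single hyperedge $C$. In this graph, $S \sim T$ iff $S \Delta T = C$, so $L_C$ is a matching Laplacian with $\|L_C\| \le 2$. The goal is to show $\|(pL_{\Hf})^{+/2}(L_H - pL_{\Hf})(pL_{\Hf})^{+/2}\| \le \epsilon$. Since all Kikuchi Laplacians annihilate constants, working on $\one^\perp$ is harmless.

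A direct matrix Bernstein bound after normalizing by $L_{\Hf}^{+/2}$ fails. The smallest nonzero eigenvalue of $L_{\Hf}$, on the first Johnson eigenspace $V_1$, is roughly $D\cdot r\ell/n$-ish, whereas the degree is $D={\ell\choose r}{n-\ell\choose r}$. The normalized summands would therefore have norm around $n/(\ell D)$, which loses polynomial factors.

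Instead, decompose $\R^{{[n]\choose \ell}} = \bigoplus_{j=0}^{\ell} V_j$ into Johnson eigenspaces, with $L_{\Hf}$ acting on $V_j$ as $\lambda_j \mathrm{Id}$ and $\lambda_j$ increasing in $j$ roughly like $D\,(1-(1-j/\ell)^r(1-j/(n-\ell))^{r})\approx D\min(1, rj/\ell)$. The key structural input is a band-locality property: for any single hyperedge $C$ with $|C|=2r$, $P_i L_C P_j = 0$ whenever $|i-j| > 2r$. I would prove this from the fact that $V_{\le j}$ is spanned by "juntas" of the form $S\mapsto f(S\cap J)$ with $|J|\le j$. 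The adjacency operator of $C$ maps such a function to one depending on $S\cap(J\cup C)$, so $A_C V_{\le j}\subseteq V_{\le j+2r}$, and symmetry gives the other direction.

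Next, group the levels into dyadic blocks $B_k = \bigoplus_{2^k\le j<2^{k+1}} V_j$, enlarged by $2r$ to form overlapping bands. By band-locality, the normalized error $M = \Lambda^{-1/2}(L_H-pL_{\Hf})\Lambda^{-1/2}$ is block-banded with only $O_r(1)$ nonzero block diagonals. Hence $\|M\|\lesssim_r \max_{k,k'\text{ adjacent}}\|P_{B_k} M P_{B_{k'}}\|$, and it suffices to bound each block pair. On blocks $k,k'$, the eigenvalues are within constant factors of $\lambda_{2^k}\approx D\min(1,2^k r/\ell)$, so $\|P_{B_k}M P_{B_{k'}}\| \lesssim \|P_{B_k}(L_H-pL_{\Hf})P_{B_{k'}}\|/(p\lambda_{2^k})$.

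Apply matrix Bernstein to the sum $\sum_C (x_C-p)P_{B_k}L_CP_{B_{k'}}$, after symmetrizing by pairing the block with its transpose. The summand norms are at most $2$. The variance proxy is at most $p\|\sum_C P L_C^2 P\|\le 2p\|P L_{\Hf} P\|\lesssim p\,\lambda_{2^{k+1}+2r}\lesssim p\lambda_{2^k}$, using band-locality again. Bernstein then gives deviation $\lesssim \sqrt{p\lambda_{2^k}\log N}+\log N$, where $N={n\choose\ell}$, so $\log N\approx \ell\log(n/\ell)$. Dividing by $p\lambda_{2^k}$ requires $p\lambda_{2^k}\gtrsim \log N/\epsilon^2$. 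The worst case is $k=0$, which needs $pD r/\ell\gtrsim \ell\log n/\epsilon^2$. Since $D\approx \ell^r n^r$, up to $r$-dependent constants, and $p{n\choose 2r}\approx pn^{2r}$, this becomes $p n^{2r}\gtrsim n (n/\ell)^{r-1}\ell\cdot \ell\log n/\epsilon^2 /\ell$, i.e., exactly the stated threshold. A union bound over the $O(\log \ell)$ blocks costs only constants once the failure probability is set to $1/(100\log n)$.

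The main obstacle is the $\log N\approx \ell\log n$ dimension factor. It appears to cost an extra factor of $\ell$, and the accounting above must check that it is exactly compensated by $\lambda_1\approx Dr/\ell$. If the accounting is off by an $\ell$, I would replace the ambient dimension by the intrinsic dimension, $\Tr(\text{variance})/\|\text{variance}\|$ (the intrinsic-dimension version of matrix Bernstein), restricted to block $B_k$. Its dimension is $\dim V_{\le 2^{k+1}}\le n^{2^{k+1}}$, so the log-dimension is about $2^k\log n$, which exactly matches the growth $\lambda_{2^k}\propto 2^k$ across blocks. This matching is the reason for applying Bernstein blockwise rather than globally. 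Verifying the band-locality lemma and this dimension-eigenvalue matching are the crux.
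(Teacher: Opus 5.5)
Your final argument is essentially the paper's proof: band locality of each $L_C$, plus matrix Bernstein applied block by block with the dimension factor taken as the block's dimension rather than $\binom{n}{\ell}$. Your junta description of $V_{\le j}$ is the paper's $F_j=\mathrm{span}\{p_R : |R|\le j\}$. Your ``symmetry gives the other direction'' is the paper's self-adjointness lemma. The diagonal part $D_C$ of $L_C$ is handled the same way, since $\one_{\{|S\cap C|=r\}}$ depends only on $S\cap C$.

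One correction. The ambient-dimension accounting is not merely suspect: it is off by exactly a factor of $\ell$. With $D\approx \ell^r n^r$, the condition $pDr/\ell\gtrsim \ell\log n/\epsilon^2$ becomes $pn^{2r}\gtrsim n(n/\ell)^{r-1}\cdot \ell\log n/\epsilon^2$. Your own displayed expression simplifies to this, which is $\ell$ times the stated threshold, not ``exactly'' it. So replacing $\log N\approx \ell\log n$ by $\log\dim(\text{block})\lesssim_r 2^k\log n$ is the essential step, not a fallback. It is the paper's key observation that $\log\dim E_i\asymp i\log n$ while $p\lambda_i\asymp_r i\log n/\epsilon^2$. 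You do not need intrinsic-dimension Bernstein for this: all summands live on the block, so the ordinary inequality with the block's dimension (the rank of the variance) suffices. Also, $\lambda_1\approx Dr/\ell$, not $Dr\ell/n$; you use the correct value later.

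The remaining difference is organizational.
\begin{itemize}
\item The paper takes a sliding window $W_i=\bigoplus_{|h-i|\le 2r}E_h$ around every level $i$. It controls only the diagonal compressions $\Pi_i(L_H-pL_{\Hf})\Pi_i$ and recovers the cross terms $f_i^{\top}Af_j$ for $|i-j|\le 2r$ by polarization. It therefore only needs eigenvalues comparable across $4r+1$ consecutive levels.
\item You use disjoint dyadic blocks, bound the off-diagonal pairs $P_{B_k}AP_{B_{k'}}$ directly via a Hermitian dilation, and sum over the $O_r(1)$ nonzero block diagonals. This uses the stronger but available fact $\lambda_j\asymp_r (j/\ell)D$ across a whole dyadic range, and needs only $O(\log\ell)$ union-bound events instead of $\ell+1$.
\end{itemize}
Both routes lose only an $O_r(1)$ factor in $\epsilon$. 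With disjoint dyadic blocks, your ``enlarged by $2r$, overlapping bands'' step is unnecessary.
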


\paragraph{Concrete instantiations:} 
When $r = 1$ (i.e., $H$ is a graph) and $\ell = 1$, observe that $K^{(\ell)}_H = H$, the graph itself. In this setting, the theorem says that random hypergraphs spectrally approximate the complete graph, provided that the expected average degree of a vertex is $\Omega(\log n)$.  More generally, when $\ell =r$, the theorem implies the less standard but still well known (in the context of refuting random XOR formulas~\cite{CojaGL07,GoerdtKrivelevich01}) observation that the natural flattening of $H$ as a ${n \choose r}\times {n \choose r}$ matrix is a graph that spectrally approximates the complete graph so long as the number of hyperedges is $\Omega(n^r \log n)$ (equivalently, the flattened graph on ${n \choose r}$ has $\Omega(\log n)$ degree). 

When $\ell > r$, Kikuchi graphs are ``lifts" of the underlying graphs/hypergraphs. 
For the threshold value of the sampling rate $p$ in the statement, the average degree of the Kikuchi graph is easily computed to be $\Theta(\ell \log n) \approx_{\log n} \log N$ for $N = {n \choose \ell}$. Thus, the statement is equivalent, up to a logarithmic factor, to the claim that Kikuchi graphs of complete hypergraphs are well approximated by Kikuchi graphs of random hypergraphs whenever their expected average degree is $\Omega(\log N)$.

\paragraph{The density requirement is sharp up to a $\log n$ factor for constant $\epsilon$:} We give a short argument here. For every $\ell < n/100$, with probability at least $0.99$ a random $H$ with density $p \gg n/{n \choose 2r}$ must contain $c_r \ell$ disjoint hyperedges for some constant $c_r>0$ depending only on $r$. Let $S$ be a subset of size $\ell$ that intersects each such edge in exactly $r$ elements. Then, $\deg(S) > c_r \ell$ in which case $\|L_H\|_2 \geq c_r \ell$. On the other hand, observe that $\|L_{\Hf}\|_2 \leq 2{\ell \choose r} {n \choose r}$ since the degree of every vertex of $K_{\Hf}$ is upper bounded by ${\ell \choose r} {n \choose r}$. If $p {n \choose 2r} \lesssim_{r} n (n/\ell)^{r-1}$, we can ensure that $\|p L_{\Hf}\|_2 < c_r \ell$. So the inequality in the statement above cannot hold for any fixed $\epsilon<1$. 

\paragraph{Absolute vs Relative Error Bounds:} 
In the context of refuting random constraint satisfaction problems, prior work~\cite{WeinAM19,Hastings20TensorPCA} proves that $\|\sum_{C \in H} b_C K_C^{(\ell)}\|_2 \lesssim \epsilon p {\ell \choose r}{n \choose r}$ whenever $p$ satisfies the same lower bound condition as in \Cref{thm:main}. Later works~\cite{GuruswamiKM22,HsiehKM23}  that generalize such results to the harsher \emph{semirandom} setting prove similar bounds for arbitrary hypergraphs (after a decomposition step). Such results immediately imply (by standard methods) an \emph{additive} error bound $\|L_H - p L_{\Hf}\|_2 \lesssim \epsilon p \|L_{\Hf}\|_2$. 

But this is too weak to obtain a spectral approximation guarantee because $L_{\Hf}$ has non-zero eigenvalues on the order of $1/\ell \|L_{\Hf}\|_2$ (see \Cref{prop:eigs-johnson}; see also the discussion in~\cite{GHKM23}). The point is that additive control at the top-eigenvalue scale is too coarse for the lower Johnson eigenspaces. Thus, one of the sparsification guarantees above is a significant strengthening of the additive error estimate above, and it easily implies that estimate. Proving such sparsification analogs of the bounds of~\cite{GuruswamiKM22,HsiehKM23} would immediately resolve the question of finding algorithms for solving semirandom planted CSPs with guarantees matching those of the best known refutation algorithms. 

\paragraph{Correlated vs Independent Sampling:} One can view $K_H$ as the product of \emph{correlated sampling} of edges of $K_{\Hf}$ where whenever we choose to keep an edge from $K_C$, we must keep \emph{all} the edges. In the setting where we \emph{independently sample} the edges of $K_{\Hf}$ (a random variable with entropy ${n \choose \ell} \sim H(p) \cdot \ell^r n^r$), a standard application of matrix concentration inequalities analogous to~\cite{SpielmanSrivastava11} shows that a subsampling rate ensures an average degree of $\Omega(\log N)$ (for $N= {n \choose \ell})$ in the sampled graph suffices for a spectral approximation. \Cref{thm:main} shows that the same sampling rate suffices even in the correlated sampling setting (a random process with a significantly lower entropy of $\sim H(p) \cdot n^{2r}$). 
inequalities~\cite{SpielmanSrivastava11}.

\paragraph{Connections to Cayley and Schreier Graph Sparsification:} Kikuchi graphs are well-understood to be projections of the Cayley graph on $\mathbb{F}_2^n$  with generators $H$ (view each $C$ as its characteristic vectors in $F_2^n$). They are also Schreier graphs of $S_n$ on ${[n] \choose \ell}$. Thus, it is natural to try to apply the recent results~\cite{BasuKothariLiuMeka25,HsiehLeeMohantyPuttermanZhang25} on Cayley/Schreier graph sparsification. Unfortunately, the strongest known bound~\cite{BasuKothariLiuMeka25} obtained by these results yields a bound on $p$ that is larger by a factor $\sim \ell^3$. This loss is prohibitive in applications. Indeed, our application to planted $4$-XOR is trivialized if the loss factor exceeds $\sim \ell$.

\subsection{Application to Solving Planted $2r$-XOR} In the planted $2r$-XOR problem, the input is a random hypergraph $H$ with density $p$ together with independent assignments $b_C \in \{\pm 1\}$ such that, for some $x^* \in \{ \pm 1\}^n$, each $b_C$ equals $\prod_{i \in C} x_i^*$ with probability $1/2 + \eta$. This data naturally describes a collection of $2r$-XOR constraints generated as follows. Choose an $x^* \in \{\pm 1\}^n$. Consider the system of satisfiable (by $x^*$) $2r$-XOR constraints $\{\prod_{i \in C} x_i = x^*_C\}_{C \in H}$. The input is obtained by flipping each right hand side independently with probability $1/2-\eta$. The goal is to compute the hidden planted assignment $x^*$. 

By a connection first formulated in~\cite{GHKM23}, \Cref{thm:main} implies that the degree-$2\ell$ sum-of-squares SDP relaxation of the Max $2r$-XOR problem is \emph{exact}. That is, the optimal solution to the SDP relaxation is a rank $1$ matrix supported only on the true assignment $x^*$. 

\begin{corollary}[Algorithms for Planted $2r$-XOR] \label{cor:planted-xor}
Let $r \leq \ell \leq n/2$ be positive integers and let $0<\eta<1/2$. Let $H$ be a random $2r$-uniform hypergraph on $[n]$ with density $p$. Suppose there is an $x^* \in \{ \pm 1 \}^n$ such that for every $C \in H$, $b_C$ is independent and takes the value $\prod_{i \in C} x^*_i$ with probability $1/2+ \eta$. If $p$ is such that $p \cdot {n \choose 2r} \gtrsim_r \frac{1}{\eta^2(1/2-\eta)} n (\frac{n}{\ell})^{r-1}\log n$, then, with high probability, the natural $2\ell$-degree sum-of-squares relaxation is \emph{exact}. In particular, a trivial rounding recovers $x^*$ up to a global sign.
\end{corollary}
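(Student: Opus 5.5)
We plan to derive \Cref{cor:planted-xor} from \Cref{thm:main} by constructing an explicit dual certificate for the degree-$2\ell$ SoS relaxation, following the connection of~\cite{GHKM23}. By the symmetry $x_i \mapsto x_i x^*_i$ we assume $x^* = \1$. Each $b_C$ is then $+1$ with probability $1/2+\eta$. The SoS objective is $\sum_{C\in H} b_C \tilde{\E}[x_C]$. Define the level-$\ell$ Kikuchi matrix $A=\sum_{C\in H} b_C K_C^{(\ell)}$, where $K_C^{(\ell)}$ is the adjacency matrix of the matching $\{S,T: S\Delta T = C\}$. For any degree-$2\ell$ pseudo-expectation, the moment matrix $M_{S,T}=\tilde{\E}[x_S x_T]$ on ${[n]\choose \ell}$ is PSD with unit diagonal, and
\[
\langle A, M\rangle = c_{\ell,r}\sum_C b_C \tilde{\E}[x_C], \qquad c_{\ell,r} = {2r\choose r}{n-2r\choose \ell-r}.
\]
The rank-one moment matrix $M^*=\1\1^\top$ of $x^*$ attains $\langle A,M^*\rangle=\1^\top A\1$. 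Exactness reduces to showing that $D - A$ is PSD, with kernel exactly $\Span(\1)$, for some diagonal $D$ with $\Tr D = \1^\top A \1$. Complementary slackness then forces every optimal $M$ to equal $\1\1^\top$. Since $\tilde{\E}[x_i x_j]$-type entries appear in $M$, this pins down $\tilde{\E}[x_ix_j]=1$ and hence recovers $x^*$.

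\textbf{Certificate.} We take $D=\diag(A\1)$, which gives $D-A = L_{H^+}-L_{H^-}+2\diag(\deg_{H^-})$, where $H^{\pm}=\{C: b_C=\pm1\}$. Equivalently, $D - A = L_{H^+} - L_{H^-}$ plus a nonnegative diagonal. The two halves $H^+$ and $H^-$ are independent random hypergraphs with densities $p_+=(1/2+\eta)p$ and $p_-=(1/2-\eta)p$. Note that $D$ is a genuine signed-Laplacian shift, so $D-A\succeq L_{H^+}-L_{H^-}$ on the nose, and it suffices to show $L_{H^+}-L_{H^-}\succeq 0$ with kernel $\Span(\1)$. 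We apply \Cref{thm:main} to $H^+$ and to $H^-$ separately, each with accuracy $\epsilon \asymp \eta$, and take a union bound. Boosting the $0.99$ success probability to high probability will require tracking the actual failure probability in the matrix Bernstein step. This gives
\[
L_{H^+}-L_{H^-} \succeq \big((1-\epsilon)p_+ - (1+\epsilon)p_-\big)L_{\Hf} = p\,(2\eta - \epsilon)L_{\Hf} \succ 0 \text{ on } \1^\perp,
\]
since $L_{\Hf}$ (a connected Johnson graph) has kernel exactly $\Span(\1)$.

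\textbf{Density requirement.} For $H^-$, which has density $p_-$, we need $p_-{n\choose 2r}\gtrsim (n/\eta^2)(n/\ell)^{r-1}\log n$. This accounts for the factor $1/(\eta^2(1/2-\eta))$ in the hypothesis. For $H^+$ the requirement is weaker. The main obstacle, which we expect to be mild, is the bookkeeping between the moment matrix and pseudo-expectation entries: we must check that rank-one $M=\1\1^\top$ forces $\tilde{\E}[x_C]=1$ for all relevant monomials, and that the SoS constraints $x_i^2=1$ make every diagonal entry equal to $1$, so that the $\langle D, M\rangle$ term is constant. We also need to handle the case where $H^-$ is very sparse, for $\eta$ near $1/2$, by using the trivial bound $L_{H^-}\preceq$ (upper half of \Cref{thm:main} at relative error $O(1)$).
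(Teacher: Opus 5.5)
This is the paper's own argument, phrased as a dual certificate rather than as the paper's primal inequality. The paper shows $\Tr(X'L_{H_e})<\Tr(X'L_{H_g})$ for every feasible $X'\neq\1\1^\top$ by applying \Cref{thm:main-restated} to both halves with error $\delta\asymp\eta$, and that is exactly your certificate $L_{H^+}-L_{H^-}\succeq 0$ with kernel $\Span(\1)$.

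You should fix one algebra slip: with $D=\diag(A\1)=D_{H^+}-D_{H^-}$ one gets $D-A=L_{H^+}-L_{H^-}$ exactly, with no extra $2\diag(\deg_{H^-})$ term. That term would push $\1$ out of the kernel and contradict $\Tr D=\1^\top A\1$. Also, the sparse-$H^-$ case needs no separate treatment, because the $1/(1/2-\eta)$ factor in the hypothesis already gives $H^-$ the density the theorem requires.
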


An algorithm with similar density running time trade-off was recently obtained by~\cite{BasuHsiehLinManohar25}. The density vs running time trade-off of their algorithm is similar except for a worse polynomial dependence on $\eta$ as opposed to the optimal $\eta^{-2}$ in the result above. A later work of Mao~\cite{Mao26NoisyKXOR} obtained an algorithm that needs a density that is smaller by a factor of $\log n$ (when $\ell$ grows superlogarithmically) in the bound above along the right quadratic dependence on $\eta$. The analysis in~\cite{BasuHsiehLinManohar25} relies on extracting an \emph{approximate} solution by rounding the sum-of-squares SDP relaxation for Max $2r$-XOR and then a certain local fixing step in order to recover the planted assignment. Our result, in particular, implies that the local-fixing step in their algorithm (for even-arity XOR) is unnecessary. We note that in their work~\cite{BasuHsiehLinManohar25}, the authors also obtain results for odd-arity XOR. Although we expect that a result analogous to~\Cref{cor:planted-xor} should hold for odd-arity XOR, it would require proving a suitable odd-uniformity analog of~\Cref{thm:main}. 

\begin{remark}[Odd $k$]
One may ask for analogous sparsification results for hypergraphs of odd uniformity. While the above definition of Kikuchi graphs does not make sense for odd arity, there are several natural analogs used in prior works~\cite{GuruswamiKM22,HsiehKM23,AlrabiahGKM22,HsiehKothariMohantyMunhaSudakov24,KothariManohar24LinearLCC}. In these constructions, each edge in the Kikuchi graph corresponds to \emph{two} hyperedges of the original hypergraph. Our techniques should extend analogously to this odd uniformity setting but we do not include this extension in the present version.
\end{remark}

\subsection{Proof Idea}
As we discussed above, it is not difficult to get an \emph{additive} error bound on $\|L_H - pL_{\Hf}\|_2 \lesssim \epsilon p\|L_{\Hf}\|_2$ but this bound is too large to be useful because $L_{\Hf}$ has a condition number of $\Omega(\ell)$ (or, there are non-zero eigenvalues of $L_{\Hf}$ that are smaller by a factor of $\ell$ compared to the largest). A natural route would be to work with normalized Laplacians $\tilde{L}_H = \sum_{C \in H} \tilde{L}_C$ where $\tilde{L}_C = (p L_{\Hf})^{-1/2} L_C (p L_{\Hf})^{-1/2}$ and the inverse square root is taken on the subspace orthogonal to $\1$, in which case an additive error bound would suffice. Unfortunately, for this setting, we do not know how to obtain a good estimate on the spectral norm of the ``variance term" that controls the error in the application of the matrix Bernstein inequality. 

Our approach is based on two simple but crucial (and somewhat miraculous, in the opinion of the author) observations about the relationship between the eigenvalues and eigenspace dimensions of $L_{\Hf}$ and the interaction of \emph{arbitrary} Kikuchi graphs with the eigenspaces of $L_{\Hf}$. Let us briefly describe both of these below. 

\paragraph{Eigenvalue Decay in $L_{\Hf}$:}
By standard results about the structure of the eigenspaces of $L_{\Hf}$, we know that for each $\ell  \geq i \geq 0$, there is an eigenspace $E_i$ of dimension ${n \choose i} -{n\choose {i-1}}$ with eigenvalue that roughly scales as $(i/\ell)\|L_{\Hf}\|_2$. When restricted to a single eigenspace, since all the eigenvalues of $L_{\Hf}$ are equal, $L_{\Hf}$ has condition number $1$, so one could attempt to apply matrix Bernstein analogously to our first failed attempt above. The dimension-dependent loss in the matrix Bernstein, when applied to $E_i$, is proportional to $\sqrt{ \log \dim(E_i)}$. It turns out that this improved bound is precisely enough for us to obtain the precise additive error bound that implies the required multiplicative error when restricted to $E_i$ at the required density $p$. This is based on the following useful coincidence (or, perhaps, a miraculous conspiracy): at the target density,
\[
\log \dim(E_i) \asymp i\log n
\qquad\text{and}\qquad
\lambda_i(pL_{\Hf}) \asymp_k i\log n/\epsilon^2.
\]
That is, the logarithm of the dimension of $E_i$ scales as the eigenvalue of $p L_{\Hf}$ on $E_i$. 

By a union bound on the $\ell+1$ different diagonal blocks, we can conclude that $\Pi_{E_i} L_H \Pi_{E_i} \approx \Pi_{E_i} L_{\Hf} \Pi_{E_i}$ for every $0 \leq i \leq \ell$. This does not, on its own, complete the proof of \Cref{thm:main}. This is because $E_i$s need not be (and in fact are not, in general) invariant subspaces for $L_{H}$. Thus, we are worried that for a function $f$ on ${[n] \choose \ell}$ that has projections in multiple $E_i$s, the quadratic forms of $L_H$ and $p L_{\Hf}$ may not be close.  

\paragraph{Band Limited Action of Kikuchi Graphs on $E_i$:} The crucial second idea that rescues our proof approach (with one modification) is the observation that \emph{every} Kikuchi graph satisfies a \emph{band locality} property in its action on $E_i$s. Specifically, for any $f \in E_i$ and every $L_C$, one has $L_C f \in \oplus_{i-k \leq j \leq i+k} E_j$, where $k=2r$. That is, $L_C$ sends an $f \in E_i$ to only ``nearby eigenspaces". In particular, $f^{\top} L_C g = 0$ whenever $f$ and $g$ are elements of $E_i$ and $E_j$ such that $|i-j|>k$. We prove this band locality property in \Cref{cor:band-limited-action-laplacian}.

This band locality does not imply that the per-eigenspace approximation argued above is enough. But if we were to upgrade the previous argument to obtain a multiplicative approximation on $\oplus_{i-k \leq j \leq i+k} E_j$ for every $j$ (instead of single $E_j$s), it turns out that we can finish the proof using the band locality property above. This stronger "per-$2k$-block of eigenspaces" guarantee also holds because the eigenvalues of $L_{\Hf}$ on any $E_j$ for $j \in [i-k,i+k] \cap \{1,2,\ldots,\ell\}$ are within an $O_k(1)$ factor of each other.  

Our argument with these two ideas is quite simple and presented in full in \Cref{sec:kikuchi-sparsification-even}. 
After preliminaries, we prove the planted-XOR application, then record the Johnson-scheme eigenspace facts and band locality, and finally prove \Cref{thm:main}.

\section{Preliminaries}
We will write $a \gtrsim b$ to denote $a \geq cb$ for some absolute constant $c>0$. Similarly, we will write $a \lesssim b$ to denote $a \leq cb$ for some absolute constant $c>0$. We will use $\gtrsim_k$ and $\lesssim_k$ to denote the versions where the hidden constant is allowed to depend on the parameter $k$. 

For two positive semidefinite matrices $A,B$, we will write $A \approx_{\delta} B$ as a shorthand for $(1+\delta)B \succeq A\succeq (1-\delta) B$. 

For any positive integer $k$, we will write $H^{(k)} \subseteq {[n] \choose k}$ for a $k$-uniform hypergraph on vertex set $[n]$. When the uniformity is clear from the context, we will suppress the superscript. 

We will say that \emph{$H^{(k)}$ is a random hypergraph with density $p$} if every possible hyperedge in ${n \choose k}$ is included independently in $H^{(k)}$ with probability $p$.

We recall the matrix Bernstein inequality~\cite{Tropp15}.

\begin{fact}
There is an absolute constant $c>0$ such that for every collection of independent random $N \times N$ matrices $A_1, A_2, \ldots$ with $\E[ A_i] = 0$ and $\|A_i\|_2 \leq R$ for each $i$,
\[
\Pr[ \|\sum_i A_i \|_2 \geq t ] \leq 2Q e^{-\frac{t^2}{2(\sigma^2 + Rt/3)}},
\]
where $\sigma^2 = \max\left\{ \|\E\sum_i A_i A_i^{\top}\|_2, \|\E\sum_i A_i^{\top} A_i\|_2  \right \}$ and $\E\sum_i A_i A_i^{\top}$ has rank $Q$.
\end{fact}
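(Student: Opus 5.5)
This Fact is the standard matrix Bernstein inequality; I would derive it from Tropp's matrix Laplace transform method after a reduction showing that the ambient dimension $N$ can be replaced by the rank $Q$. In the paper's application the summands are symmetric (they are centered multiples of restricted Laplacians), so I treat that case in full. The rectangular case then follows by Hermitian dilation. The reduction step needs the most care, since it is where a factor of order $Q$, rather than a dimension $2N$ or $2Q$, has to come out cleanly.

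\emph{Step 1 (dimension reduction).} Let $U$ be the range of $M=\E\sum_i A_iA_i^{\top}$, so $\dim U = Q$. For $v\perp U$ we have
\[
0=v^{\top}Mv=\sum_i \E\|A_i^{\top}v\|_2^2,
\]
hence $A_i^{\top}v=0$ almost surely for every $i$. Thus every $A_i$ has column space inside $U$ almost surely.
\begin{itemize}
\item In the symmetric case, the row space equals the column space, so $A_i=\Pi_U A_i\Pi_U$. The problem is then genuinely $Q\times Q$.
\item In the general case, the same argument applied to $\E\sum_i A_i^{\top}A_i$ confines the rows to a space $U'$ of dimension $Q'$. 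One then works with the Hermitian dilation $\begin{pmatrix}0&A_i\\ A_i^{\top}&0\end{pmatrix}$ on $U\oplus U'$. This changes only the constant in front, which is absorbed into the absolute constants, and $\sigma^2$ is preserved exactly by the dilation.
\end{itemize}

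\emph{Step 2 (Laplace transform).} Write $Y=\sum_i A_i$, now Hermitian of size $Q$. For $\theta>0$, Markov's inequality gives
\[
\Pr[\lambda_{\max}(Y)\ge t]\le e^{-\theta t}\,\E\Tr e^{\theta Y}.
\]
By Lieb's concavity theorem, $H\mapsto \Tr\exp(K+\log H)$ is concave. Applying it iteratively with Jensen over the independent summands, as in Tropp's subadditivity of matrix cumulant generating functions, yields
\[
\E\Tr e^{\theta Y}\le \Tr\exp\Big(\sum_i\log\E e^{\theta A_i}\Big).
\]

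\emph{Step 3 (scalar mgf bound).} Since $\|A_i\|\le R$ and $\E A_i=0$, the scalar inequality $e^{x}\le 1+x+\frac{x^2/2}{1-|x|/3}$ for $|x|<3$, transferred through the spectral mapping theorem, gives
\[
\E e^{\theta A_i}\preceq I+g(\theta)\,\E A_i^2\preceq \exp\big(g(\theta)\,\E A_i^2\big),\qquad g(\theta)=\frac{\theta^2/2}{1-R\theta/3},
\]
valid for $0<\theta<3/R$. Operator monotonicity of $\log$ and monotonicity of $\Tr\exp$ then give
\[
\E\Tr e^{\theta Y}\le \Tr\exp\big(g(\theta)\,\E\textstyle\sum_i A_i^2\big)\le Q\,e^{g(\theta)\sigma^2}.
\]
The factor here is $Q$ and not $N$, because we are working on $U$.

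\emph{Step 4 (optimize and finish).} Choosing $\theta=t/(\sigma^2+Rt/3)$, which lies in $(0,3/R)$, gives $Q e^{-t^2/(2(\sigma^2+Rt/3))}$. Applying the same bound to $-Y$ and taking a union bound yields the factor $2Q$ for $\|Y\|_2=\max(\lambda_{\max}(Y),\lambda_{\max}(-Y))$.
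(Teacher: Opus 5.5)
For symmetric summands your argument is correct, and symmetric summands are the only case the paper uses: there $X_C=(\1(C\in H)-p)\Pi_iL_C\Pi_i$ is symmetric. The reduction $A_i=\Pi_UA_i\Pi_U$ almost surely is sound, and Steps 2--4 are Tropp's Laplace-transform proof of the Hermitian matrix Bernstein inequality, run on a $Q$-dimensional space. That gives the factor $Q$ for $\lambda_{\max}$ and $2Q$ after the union bound with $-Y$. The paper gives no proof of its own, only the citation to Tropp, so there is nothing further to compare on this part.

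The gap is in the non-symmetric case, which the Fact as stated does cover, since it defines $\sigma^2$ through both $A_iA_i^\top$ and $A_i^\top A_i$.

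\begin{itemize}
\item \emph{What the dilation actually gives.} Dilating on $U\oplus U'$ and applying your Hermitian bound yields the prefactor $Q+Q'$, where $Q'=\mathrm{rank}\,\E\sum_iA_i^\top A_i$.
\item \emph{Why that is not a constant times $Q$.} Take $A_i=\epsilon_ie_1e_i^\top$ for $i\le N$, with Rademacher $\epsilon_i$. Then $\E\sum_iA_iA_i^\top=Ne_1e_1^\top$ has rank $Q=1$, while $\E\sum_iA_i^\top A_i=I_N$ has rank $N$. So your sentence ``this changes only the constant in front'' is false.
\item \emph{What your route does prove.} It gives $(Q+Q')e^{-t^2/(2(\sigma^2+Rt/3))}$. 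This matches the displayed $2Q$ only if $Q$ is read as $\max(Q,Q')$.
\item \emph{What is missing for a rank-$Q$ prefactor.} You need a further ingredient, for instance the intrinsic-dimension form of Bernstein (Tropp, Ch.~7). Write $V_1=\E\sum_iA_iA_i^\top$ and $V_2=\E\sum_iA_i^\top A_i$. Since $\Tr V_1=\Tr V_2=\sum_i\E\|A_i\|_F^2$, the dilated variance $\mathrm{diag}(V_1,V_2)$ has intrinsic dimension $2\Tr V_1/\sigma^2\le 2\Tr V_1/\|V_1\|\le 2Q$. This gives a prefactor $8Q$.
\item \emph{Cost of that route.} That theorem only applies for $t\ge\sigma+R/3$, but below this threshold the bound $8Qe^{-t^2/(2(\sigma^2+Rt/3))}$ already exceeds $1$, so the restriction is harmless. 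The result is an absolute constant times $Q$, not the literal $2Q$.
\end{itemize}

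The simplest fix is to state the Fact for symmetric $A_i$ only, which is all the paper needs.
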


\section{Algorithms for Planted CSPs from Kikuchi Sparsification}

\begin{theorem}[Solving Random Planted $2r$-XOR] \label{thm:main-planted-xor-even}
Let $k=2r$ be an even positive integer, let $0<\eta<1/2$, and let $r\leq \ell = \ell(n) \leq n/2$. There is an algorithm that takes input a hypergraph $H \subseteq {[n] \choose k}$ and a collection of bits $\{b_C\}_{C \in H}$, runs in time $n^{O(\ell)}$ and outputs an assignment $x\in \{\pm 1\}^n$ with the following guarantee:
Suppose $H \subseteq {[n] \choose k}$ is chosen by including each hyperedge independently with probability $p$ such that $p {n \choose k} \gtrsim_k \frac{n}{\eta^2(1/2-\eta)} (n/\ell)^{k/2-1} \log n$. Let $\{b_C\}_{C \in H}$ be chosen by selecting an arbitrary $x^* \in \{\pm 1\}^n$ and setting $b_C = \prod_{i \in C} x^*_i$ with probability $\frac{1}{2} + \eta$ and  $b_C = -\prod_{i \in C} x^*_i$ with probability $\frac{1}{2} -\eta$ independently for each $C \in H$. Then, with probability at least $0.99$ over the draw of $H$ and $\{ b_C \}_{C \in H}$, the algorithm outputs $x \in \{x^*,-x^*\}$. 
\end{theorem}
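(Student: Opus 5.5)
The plan is to reduce to the case $x^* = \1$, rewrite the Max $2r$-XOR objective as a difference of two Kikuchi Laplacian quadratic forms, and then apply \Cref{thm:main} separately to the satisfied and violated constraint hypergraphs.

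\textbf{Algorithm and reduction.} The algorithm solves the degree-$2\ell$ sum-of-squares relaxation of $\max_{x \in \{\pm 1\}^n} \sum_{C \in H} b_C x_C$, where $x_C=\prod_{i\in C}x_i$. It reads off the degree-$2$ moments $\tilde{\E}[x_ix_j]$ and outputs the sign pattern of a top eigenvector of that matrix.

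By the change of variables $x_i \mapsto x^*_i x_i$, which maps pseudo-distributions to pseudo-distributions of the same degree, we may assume $x^* = \1$. Then $H$ splits as $H_s \cup H_u$: the satisfied constraints ($b_C=+1$) and the violated ones ($b_C=-1$). These are independent random hypergraphs with densities $p(1/2+\eta)$ and $p(1/2-\eta)$ respectively. (Inclusion in $H_s$ and in $H_u$ are disjoint events, but this does not matter: we apply \Cref{thm:main} to each marginally and take a union bound.)

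\textbf{Polynomial identity.} Let $D = \binom{2r}{r}\binom{n-2r}{\ell-r}$ be the number of Kikuchi edges $\{S,T\}$ with $S\Delta T = C$, for each fixed $C$. Let $y \in \R^{\binom{[n]}{\ell}}$ be the vector of monomials, $y_S = x_S$. Using $x_i^2 = 1$, for every such edge we have $x_C = x_S x_T$, so $1 - x_C = \tfrac12 (x_S - x_T)^2$. Summing over Kikuchi edges gives the degree-$2\ell$ polynomial identity
\[
\sum_{C\in H} b_C x_C \;=\; |H_s| - |H_u| \;-\; \tfrac{1}{D}\, y^\top\bigl(L_{H_s} - L_{H_u}\bigr) y ,
\]
modulo the ideal $\{x_i^2 = 1\}$. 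This identity holds under any degree-$2\ell$ pseudo-expectation.

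\textbf{Spectral step.} Apply \Cref{thm:main} to $H_s$ and to $H_u$ with $\epsilon = c\eta$ for a small constant $c$. The binding condition is the one for $H_u$:
\[
p(1/2-\eta)\binom{n}{2r} \gtrsim_r \frac{n}{\eta^2}\,(n/\ell)^{r-1}\log n,
\]
which is exactly the hypothesis. Using constants slightly larger than those in \Cref{thm:main} to push each failure probability below $0.005$, with probability $0.99$ we get
\[
L_{H_s} - L_{H_u} \;\succeq\; \bigl[(1-\epsilon)(\tfrac12+\eta) - (1+\epsilon)(\tfrac12-\eta)\bigr]\, p\, L_{\Hf} \;\succeq\; \eta\, p\, L_{\Hf}.
\]
Let $\tilde{\E}$ be an optimal pseudo-expectation and $M = \tilde{\E}[yy^\top] \succeq 0$. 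Optimality gives $\tilde{\E}\bigl[\sum_C b_C x_C\bigr] \ge |H_s|-|H_u|$, which is the value at $x=\1$. The identity then forces $\langle M, L_{H_s}-L_{H_u}\rangle \le 0$. Hence $\langle M, L_{\Hf}\rangle \le 0$, and since both matrices are PSD, $\langle M, L_{\Hf}\rangle = 0$.

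The Johnson graph $K_{\Hf}$ is connected for $r \le \ell \le n/2$, so $\ker L_{\Hf} = \Span(\1)$. Combined with $M_{SS} = 1$, this forces $M = \1\1^\top$, i.e.\ $\tilde{\E}[x_Sx_T]=1$ for all $S,T$. Taking $S = U\cup\{i\}$ and $T = U\cup\{j\}$ with $|U| = \ell-1$ gives $\tilde{\E}[x_ix_j] = 1$ for all $i,j$. So the degree-$2$ moment matrix is $\1\1^\top$, the relaxation is exact, and rounding returns $\pm \1$; undoing the change of variables returns $\pm x^*$.

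\textbf{Expected obstacle.} The delicate points are verifying the polynomial identity at the pseudo-expectation level, including that the multiplicity $D$ is uniform over $C$, and the connectivity and kernel claim for the Johnson graph. The probabilistic content is entirely supplied by \Cref{thm:main}. The one place to be careful is that the $1/(1/2-\eta)$ factor in the hypothesis arises from the sparser hypergraph $H_u$, so \Cref{thm:main} must be applied to it at its own density.
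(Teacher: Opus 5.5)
Your proposal is correct and follows essentially the same route as the paper: flip signs to reduce to $x^*=\1$, split $H$ into satisfied and violated hyperedges, use the unit-diagonal constraint to rewrite the objective gap as $\langle M, L_{H_s}-L_{H_u}\rangle$, apply the sparsification theorem to both parts with $\epsilon\asymp\eta$, and conclude from $\ker L_{\Hf}=\Span(\1)$ and $M_{SS}=1$ that the optimal moment matrix is $\1\1^\top$. The differences are cosmetic: you work with the full degree-$2\ell$ pseudo-expectation rather than the paper's bare PSD, unit-diagonal level-$\ell$ matrix, and you round via the degree-$2$ moments instead of reading off $X(S,S\Delta\{1,i\})$.
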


Our algorithm is based on solving an SDP that relaxes a certain ``lifted" Max 2-XOR problem.

\paragraph{Algorithm:} 
The SDP variable is indexed by $\ell$-subsets; an integral rank-one solution associated with an assignment $x\in\{\pm 1\}^n$ has entries $X(S,T)=x_Sx_T$, where $x_S=\prod_{i\in S}x_i$.
Let $A_H = \sum_{C \in H} b_C K_C$ be the \emph{signed} Kikuchi graph of $H$ at level $\ell$. Our algorithm computes:
\begin{equation} \label{eq:sdp-even}
    \argmax_{X: X \succeq 0, \diag(X) = \1} \Tr(A_H \cdot X)
\end{equation} This is a semidefinite program and can be solved\footnote{Technically, the ellipsoid algorithm allows computing only an additive approximation to $X$ with $\exp(-n)$ error in $n^{O(\ell)}$ time. For our purposes, it suffices to round each entry of such an $X$ to the nearest integer.} in $n^{O(\ell)}$ time. We note that this SDP has a subset of constraints that appear in the canonical degree-$2\ell$ sum-of-squares relaxation for Max $2r$-XOR.

\textbf{Rounding}: Set $x_1 = 1$. For each $i>1$, choose any $S$ of size $\ell$ such that $i \in S$ and $1 \not \in S$. Set  $x_i = X(S,S \Delta \{1,i\})$.
For a rank-one moment matrix, this value is independent of the choice of $S$.

\begin{proposition} \label{prop:rounding-works-even}
Suppose $X = (y^{\odot \ell}) (y^{\odot \ell})^{\top}$ for some $y \in \{\pm 1\}^n$. Then, the rounding process above outputs either $y$ or $-y$. 
\end{proposition}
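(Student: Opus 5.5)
We will compute the rounded values directly from the entries of $X$. Here $y^{\odot \ell}$ is the vector indexed by $\ell$-subsets with $S$-th entry $y_S=\prod_{j\in S}y_j$, so $X(S,T)=y_Sy_T$ for all $\ell$-subsets $S,T$.

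\textbf{Step 1: the chosen pair is well defined.} Fix $i>1$ and any $\ell$-subset $S$ with $i\in S$ and $1\notin S$. Then $T:=S\Delta\{1,i\}=(S\setminus\{i\})\cup\{1\}$ is again an $\ell$-subset, so $X(S,T)$ is an entry of $X$. (Such an $S$ exists because $1\le\ell\le n/2<n$: take $i$ together with $\ell-1$ further elements of $[n]\setminus\{1,i\}$, of which there are $n-2\ge\ell-1$.)

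\textbf{Step 2: the entry equals $y_1y_i$.} Since $y_j^2=1$ for every $j$, the product $y_Sy_T$ equals $\prod_{j\in S\Delta T}y_j$. Here $S\Delta T=\{1,i\}$, so
\[
X(S,T)=y_Sy_T=y_1y_i .
\]
This is independent of the choice of $S$, which confirms the remark preceding the proposition.

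\textbf{Step 3: conclude.} The rounding outputs $x_1=1$ and $x_i=y_1y_i$ for $i>1$. Since $y_1^2=1$, also $x_1=y_1\cdot y_1$, so $x=y_1\cdot y$ coordinatewise. As $y_1\in\{\pm 1\}$, the output is $y$ if $y_1=1$ and $-y$ if $y_1=-1$.

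The argument is a direct calculation. The only points that need care are that $S\Delta\{1,i\}$ still has size $\ell$, which holds because exactly one element of $\{1,i\}$ lies in $S$, and that at least one valid $S$ exists, which uses $\ell<n$.
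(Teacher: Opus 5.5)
Your proof is correct and follows the paper's argument: both compute $X(S,S\Delta\{1,i\})=y_1y_i$ by cancelling $y_j^2=1$ over $S\cap T$. The only difference is that you conclude $x=y_1\cdot y$ directly, whereas the paper assumes $y_1=1$ without loss of generality. That assumption is justified because $X$ is unchanged under $y\mapsto -y$, so your direct version is slightly cleaner.
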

\begin{proof}
Let $x$ be the output of the rounding algorithm.
Suppose WLOG that $y_1 = 1$. We will prove that $x = y$. 
For any $S$ containing $i>1$ and not containing $1$, observe that $X(S,S \Delta \{i,1\}) = \prod_{j \in S} y_j \prod_{j \in S \Delta \{1,i\}} y_j = y_1 y_i = y_i$. This finishes the proof.
\end{proof}



\begin{proof}[Proof of Theorem~\ref{thm:main-planted-xor-even}]
Given Proposition~\ref{prop:rounding-works-even}, it suffices to prove that the output $X$ in \eqref{eq:sdp-even} equals $(x^*)^{\odot \ell}((x^*)^{\odot \ell})^{\top}$. Let $X' = X \star (x^*)^{\odot \ell}((x^*)^{\odot \ell})^{\top}$ where $\star$ denotes the entrywise/Hadamard product of matrices. Observe that $X'$ also has diagonal entries equal to $1$ and that by the Schur product theorem, $X'$ is positive semidefinite. This effectively allows us to pretend, for the sake of analysis, that the planted assignment is all ones. Our goal is equivalent to proving that $X' = \1 \1^{\top}$. 

Let $H_g \subseteq H$ be the set of all ``good" or error-free hyperedges $C$ such that $b_C = \prod_{i \in C} x^*_i$. Let $H_e = H \setminus H_g$ be the set of ``erroneous" hyperedges.  Then, $A_H = (x^*)^{\odot \ell}((x^*)^{\odot \ell})^{\top} \star (K_{H_g} - K_{H_e})$ where $\star$ denotes the \emph{Hadamard} (entrywise) product. We thus have $\Tr((x^*)^{\odot \ell}((x^*)^{\odot \ell})^{\top} \cdot A_H) = \Tr(D_{H_g}) - \Tr(D_{H_e})$ where $D_{H_g}$ and $D_{H_e}$ are the diagonal degree matrices of $K_{H_g}$ and $K_{H_e}$, respectively. It is thus enough to prove that whenever $X' \neq \1 \1^{\top}$, 
\[
\Tr(X \cdot A_H) = \Tr( X \cdot (x^*)^{\odot \ell}((x^*)^{\odot \ell})^{\top} \star (K_{H_g} - K_{H_e})) = \Tr(X' \cdot (K_{H_g} - K_{H_e})) < \Tr(D_{H_g}) - \Tr(D_{H_e}).
\]
Rearranging, this is equivalent to proving that:
\[
\Tr( X' \cdot L_{H_e}) <  \Tr(X' \cdot L_{H_g}),
\]
where $L_{H_g} = D_{H_g} - K_{H_g}$ is the Laplacian of $K_{H_g}$ and $L_{H_e} = D_{H_e} - K_{H_e}$ is the Laplacian of $K_{H_e}$.

Since $H$ is a random hypergraph with density $p$, $H_e$ is a random hypergraph with density $p (1/2 - \eta)$ and $H_g$ is a random hypergraph with density $p (1/2 + \eta)$. Let $\delta>0$ be a sufficiently small constant multiple of $\eta$. By the density assumption, both $H_e$ and $H_g$ satisfy the hypothesis of Theorem~\ref{thm:main-restated} with error parameter $\delta$. Applying Theorem~\ref{thm:main-restated} to each of $H_e$ and $H_g$, we have that with probability at least $1-1/(100n)$, 
\[
L_{H_e} \approx_\delta p (1/2 - \eta) L_{\Hf}
\] 
and 
\[
L_{H_g} \approx_\delta p (1/2 + \eta) L_{\Hf}.
\]
By a union bound, both of the above events occur with probability at least $0.98$. Condition on this event for the rest of the proof. 

Then, $L_{H_g} \succeq (1-\delta) p (1/2 + \eta) L_{\Hf}$ and $L_{H_e} \preceq (1+ \delta) p (1/2 - \eta) L_{\Hf}$. If $\delta \lesssim \eta$, then, $(1+ \delta) p (1/2 - \eta) < (1-\delta) p (1/2 + \eta)$. 

Now, as we recall (from the standard eigenvalue formulas for Johnson scheme in Proposition~\ref{prop:eigs-johnson}) $L_{\Hf}$ has a one-dimensional kernel (along the all-ones direction). Indeed, under $r\leq \ell \leq n/2$, \Cref{prop:eigs-johnson} gives $\lambda_i>0$ for every $i\geq 1$. Let $\Pi$ be the projection matrix orthogonal to the all-ones direction. Then, we must have:
\[
\Pi L_{H_e} \Pi \prec \Pi L_{H_g} \Pi.
\]

Now, observe that if $X' \neq \1 \1^{\top}$, $\Pi X' \Pi$ is non-zero. Indeed, a positive semidefinite matrix with diagonal entries equal to $1$ whose projection onto $\1^{\perp}$ vanishes must equal $\1\1^{\top}$. Thus, taking trace products with $X'$ on both sides  yields: 
\[
\Tr(X' \cdot L_{H_e}) = \Tr( X' \cdot \Pi L_{H_e} \Pi ) < \Tr ( X' \cdot \Pi L_{H_g} \Pi) = \Tr(X' \cdot L_{H_g}).
\]
This completes the proof. 
\end{proof}

\section{Johnson Eigenspaces and Band Limited Action of Kikuchi Graphs} \label{sec:johnson-band-limited}
A matrix $M$ with rows and columns indexed by ${[n] \choose \ell}$ is set-symmetric if its entries depend only on $|S \Delta T|$. Such matrices form a commutative algebra called the \emph{Johnson scheme} and thus admit a simultaneous eigenspace decomposition. We work throughout this section in the regime $\ell \leq n/2$. We will need some elementary facts about these eigenspaces and the corresponding eigenvalues. We direct the reader to~\cite{Godsil93} for detailed background on eigenvalues and eigenspaces of the Johnson scheme. We will view these matrices as acting on the linear space of all functions $f:{[n] \choose \ell} \rightarrow \R$. Let us define some special functions that will help us in the following.  
For \(R\subset[n]\), define $p_R$ as the function such that at any $S$, $p_R(S)=\one_{\{R\subset S\}}$.
For \(d\ge 0\), define the space of polynomials of degree at most $d$ by $F_d=\operatorname{span}\{p_R: |R|\le d\}$. We also set \(F_{-1}=0\), and once \(d\ge \ell\) we identify \(F_d\) with \(V\). We observe that $F_d$ is spanned by ``homogeneous" degree $d$ polynomials. For any $a \in [n]$, let $x_a = \one_{(a \in S)}$ be the indicator of whether a given input $S$ contains the element $a$. Then, observe that degree $d$ polynomials are in fact linear combinations of $\prod_{a \in R} x_a$ for $|R| = d$. 

\begin{lemma}
For any $d \leq \ell$, $F_d=\operatorname{span}\{p_R: |R|=d\}$ and
$\dim F_d=\binom nd$. In particular, \(F_\ell=V\).
\end{lemma}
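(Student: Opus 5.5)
We need to prove two things for $d\le \ell$: that $F_d=\operatorname{span}\{p_R:|R|=d\}$, and that $\dim F_d=\binom nd$. The plan is to settle the spanning claim by a counting identity and the dimension claim by a triangular-system (Gottlieb-type) argument, using the hypothesis $\ell\le n/2$.

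\textbf{Spanning.} It suffices to show $p_{R'}\in \operatorname{span}\{p_R:|R|=d\}$ for every $|R'|<d$. Take $R'$ with $|R'|=e<d$. For an $\ell$-set $S\supseteq R'$, the number of $d$-sets $R$ with $R'\subseteq R\subseteq S$ is $\binom{\ell-e}{d-e}$. This is positive because $d\le\ell$, and it does not depend on $S$. If $S\not\supseteq R'$, no such $R$ exists. Therefore
\[
\sum_{R\supseteq R',\,|R|=d} p_R=\binom{\ell-e}{d-e}\,p_{R'},
\]
so $p_{R'}$ lies in the span of the degree-$d$ functions. The same identity with $d=\ell$ shows $F_\ell$ contains every indicator $p_S=\one_{\{S\}}$ for $|S|=\ell$, hence $F_\ell=V$.

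\textbf{Dimension.} The span has at most $\binom nd$ generators, so we only need linear independence of $\{p_R:|R|=d\}$ when $d\le\ell\le n/2$. This is the step where the hypothesis $\ell\le n/2$ is essential; without it independence can fail. Independence is equivalent to the inclusion matrix $W_{d,\ell}$ having full row rank $\binom nd$. Its rows are indexed by $d$-sets $R$, its columns by $\ell$-sets $S$, and its entries are $\one_{R\subseteq S}$.

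I would prove the full-rank statement via the inclusion identities $W_{d,\ell}=\binom{\ell-d}{m-d}^{-1}W_{d,m}W_{m,\ell}$ for $d\le m\le\ell$. These reduce the problem to showing $W_{m,m+1}$ has full row rank whenever $m<\ell\le n/2$, i.e.\ when $m+1\le n-m$. For that, compute
\[
W_{m,m+1}W_{m,m+1}^\top=(n-m)I+A,
\]
where $A$ is the adjacency matrix of the graph on $m$-sets joined when they meet in $m-1$ elements. The spectrum of $A$ is explicit, but a self-contained route is the identity
\[
W_{m,m+1}W_{m,m+1}^\top-W_{m-1,m}^\top W_{m-1,m}=(n-2m)I.
\]
By induction, this gives positive definiteness when $n-2m>0$. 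The boundary case $n=2m+1$ is not an issue, since $n-2m=1$ is still positive.

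Composing the surjections gives $\operatorname{rank}W_{d,\ell}=\binom nd$. Hence the $p_R$ with $|R|=d$ are independent and $\dim F_d=\binom nd$.
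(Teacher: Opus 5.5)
Your proof is correct, and it proves linear independence by a genuinely different route from the paper; the spanning half is the same as the paper's identity $\sum_{R\supseteq R',\,|R|=d}p_R=\binom{\ell-e}{d-e}p_{R'}$.

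\textbf{The paper's route.} It works directly with a vanishing combination $\sum_{|R|=d}a_Rp_R=0$. Comparing this relation at $Y\cup\{p\}$ and $Y\cup\{q\}$ gives a degree-$(d-1)$ relation on the $(\ell-1)$-subsets of $[n]\setminus\{p,q\}$. Induction then forces $a_{U\cup\{p\}}=a_{U\cup\{q\}}$; this step uses $d-1\le\min\{\ell-1,n-\ell-1\}$, which is where $\ell\le n/2$ enters. Connectivity of the one-element-exchange graph on $d$-sets makes all $a_R$ equal, and evaluating at any $S$ gives $a\binom{\ell}{d}=0$.

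\textbf{Your route.} You prove the Gottlieb--Kantor full-rank theorem for inclusion matrices. $W_{d,\ell}$ is a positive multiple of $W_{d,d+1}\cdots W_{\ell-1,\ell}$, so it suffices that each factor is surjective. That follows from $W_{m,m+1}W_{m,m+1}^\top=(n-m)I+A$ and $W_{m-1,m}^\top W_{m-1,m}=mI+A$, which share the same Johnson adjacency $A$. I checked this identity and it is right.

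\textbf{Comparison.} The paper's argument is matrix-free and elementary. However, its induction hypothesis implicitly has to range over varying ground sets and levels. Yours is quantitative, since it gives $W_{m,m+1}W_{m,m+1}^\top\succeq(n-2m)I$. It also uses exactly the raising/lowering structure that underlies the Johnson eigenspaces $E_i$.

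\textbf{A simplification.} No induction is needed in your last step. $W_{m-1,m}^\top W_{m-1,m}$ is a Gram matrix, hence positive semidefinite, so $W_{m,m+1}W_{m,m+1}^\top\succeq(n-2m)I\succ 0$ immediately. Here $n-2m\ge 2$ because $m\le\ell-1$ and $\ell\le n/2$.
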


\begin{proof}
We first prove that $\{p_R | |R|=d\}$ are linearly independent on ${[n] \choose \ell}$ by induction on $d$.

For \(d=0\), the claim is clear. Suppose \(d\ge 1\), and assume $\sum_{|R|=d} a_R p_R(S)=0$
for every \(S\in\X\). Fix distinct \(p,q\in[n]\), and let $Y\subset [n]\setminus\{p,q\}$ such that $|Y|=\ell-1$.
Comparing the above relation at \(Y\cup\{p\}\) and at \(Y\cup\{q\}\), we get $\sum_{\substack{U\subset Y\\ |U|=d-1}}
\bigl(a_{U\cup\{p\}}-a_{U\cup\{q\}}\bigr)=0$
for every such \(Y\).

This is a relation among the functions \(p_U\), \(|U|=d-1\), on the \((\ell-1)\)-subsets of the ground set \([n]\setminus\{p,q\}\). Since
$d-1\le \min\{\ell-1,n-\ell-1\}$,
the induction hypothesis gives $a_{U\cup\{p\}}=a_{U\cup\{q\}}$ for every \(U\subset[n]\setminus\{p,q\}\) of size \(d-1\). Hence any two \(d\)-sets differing in exactly one element have the same coefficient. The graph on \(d\)-subsets where two sets are adjacent if they differ in one element is connected, so all \(a_R\) are equal to some constant \(a\). Evaluating the original relation at any \(S\in\X\) gives $0=a\binom{\ell}{d}$.
Thus \(a=0\). This proves linear independence.

Now let \(|Q|<d\). On \(\X\) we have
\[
p_Q(S)
=
\frac{1}{\binom{\ell-|Q|}{d-|Q|}}
\sum_{\substack{R\supset Q\\ |R|=d}} p_R(S).
\]
Indeed, if \(Q\subset S\), then the number of \(d\)-sets \(R\) with $Q\subset R\subset S$
is \(\binom{\ell-|Q|}{d-|Q|}\), while if \(Q\not\subset S\), both sides vanish. Therefore \(F_d\) is spanned by the functions \(p_R\) with \(|R|=d\). Since these functions are linearly independent, $\dim F_d=\binom nd$.
\end{proof}

\begin{lemma}
For \(0\le i\le \ell\), define $E_i=F_i\cap F_{i-1}^{\perp}$.
Then
\[
V=E_0\oplus E_1\oplus\cdots\oplus E_\ell
\]
orthogonally, and $\dim E_i=\binom ni-\binom n{i-1}$.
\end{lemma}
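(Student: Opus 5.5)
This follows from the previous lemma and the nested filtration $F_{-1}=0\subseteq F_0\subseteq F_1\subseteq\cdots\subseteq F_\ell=V$. The plan is to show that the $E_i$ are mutually orthogonal and that $F_d=E_0\oplus\cdots\oplus E_d$ for every $d$. Taking $d=\ell$ then gives the decomposition of $V$, and comparing dimensions gives $\dim E_i$.

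\textbf{Nesting.} First record that $F_{i-1}\subseteq F_i$. This is immediate from the definition $F_d=\operatorname{span}\{p_R:|R|\le d\}$. The dimension formula $\dim F_d=\binom nd$ from the previous lemma holds for $d\le \ell$ (using $\ell\le n/2$), and we set $F_{-1}=0$. Since $F_\ell=V$, the chain ends at $V$.

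\textbf{Orthogonality.} Take $i<j$, $f\in E_i$ and $g\in E_j$. Then $f\in F_i\subseteq F_{j-1}$, while $g\in F_{j-1}^{\perp}$. Hence $\langle f,g\rangle=0$, so the $E_i$ are mutually orthogonal.

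\textbf{Induction.} We show $F_d=E_0\oplus\cdots\oplus E_d$ by induction on $d$. The case $d=0$ holds because $E_0=F_0\cap 0^{\perp}=F_0$. For the inductive step, use the finite-dimensional orthogonal decomposition of $F_d$ relative to its subspace $F_{d-1}$:
\[
F_d=F_{d-1}\oplus\bigl(F_d\cap F_{d-1}^{\perp}\bigr)=F_{d-1}\oplus E_d .
\]
This holds because any $f\in F_d$ splits as its projection onto $F_{d-1}$, which lies in $F_{d-1}\subseteq F_d$, plus a remainder lying in $F_d\cap F_{d-1}^\perp$. Combined with the inductive hypothesis, this gives the claim, and $d=\ell$ yields $V=E_0\oplus\cdots\oplus E_\ell$ orthogonally.

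\textbf{Dimensions.} From $F_i=F_{i-1}\oplus E_i$ we get $\dim E_i=\dim F_i-\dim F_{i-1}=\binom ni-\binom n{i-1}$. For $i=0$ this reads $1-0$, interpreting $\binom n{-1}=0$.

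None of these steps is a real obstacle. The only point needing care is that the previous lemma's dimension count is valid for every $i\le\ell$, which requires $\ell\le n/2$. This guarantees the strict growth $\binom ni>\binom n{i-1}$, so each $E_i$ is nonzero and has the stated dimension.
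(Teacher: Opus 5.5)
Your proof is correct and follows the paper's argument: it also uses the filtration $0=F_{-1}\subset F_0\subset\cdots\subset F_\ell=V$ to get $F_i=F_{i-1}\oplus E_i$ orthogonally, iterates this, and reads off $\dim E_i=\dim F_i-\dim F_{i-1}$ from the previous lemma. You simply write out the pairwise orthogonality check and the induction more explicitly than the paper does.
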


\begin{proof}
Since $0=F_{-1}\subset F_0\subset F_1\subset\cdots\subset F_\ell=V$,
we have $F_i=F_{i-1}\oplus E_i$
orthogonally. Iterating gives the orthogonal decomposition of \(V\). Also, $\dim E_i
=
\dim F_i-\dim F_{i-1}
=
\binom ni-\binom n{i-1}$.
\end{proof}

These spaces are the standard Johnson eigenspaces.
The eigenspaces of the matrices from the Johnson scheme can be described in terms of the more familiar polynomial functions. We record this as the following easy observation.  
\begin{lemma}[Polynomial description] \label{lem:poly-description}
Let \(x_a(S)=\one_{\{a\in S\}}\) and let $f:{[n] \choose \ell} \rightarrow \R$ be a degree $\leq d$ polynomial in $x_{a}$. Then, $f \in F_d$. 
\end{lemma}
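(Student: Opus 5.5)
The plan is to expand $f$ into monomials and show that each monomial is one of the functions $p_R$ with $|R|\le d$, so that $f$ lies in their span, which is $F_d$ by definition. A polynomial of degree at most $d$ in the variables $x_a$ is a finite linear combination of products $\prod_{a\in M} x_a^{e_a}$ with $e_a\ge 1$ and $\sum_{a\in M} e_a\le d$. I will reduce each such monomial to a multilinear one and identify it with some $p_R$; linearity of the span then finishes the argument.

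First, each $x_a$ takes values in $\{0,1\}$ on $\binom{[n]}{\ell}$, so $x_a^{e}=x_a$ for every $e\ge 1$. This equality holds as functions on $\binom{[n]}{\ell}$, and that is all we need. Hence every monomial $\prod_{a\in M}x_a^{e_a}$ equals the multilinear monomial $\prod_{a\in M}x_a$, and its support satisfies $|M|\le\sum_a e_a\le d$.

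Second, for any $S\in\binom{[n]}{\ell}$,
\[
\prod_{a\in M}x_a(S)=\prod_{a\in M}\one_{\{a\in S\}}=\one_{\{M\subseteq S\}}=p_M(S).
\]
The empty product gives the constant function $1=p_\emptyset$. So every monomial of $f$ equals $p_M$ for some $|M|\le d$, and $f$ is a linear combination of such functions. It therefore lies in $\operatorname{span}\{p_R:|R|\le d\}=F_d$.

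There is no substantive obstacle in this argument. The only points needing care are the idempotence reduction and the constant term, both handled above. One could additionally invoke the preceding lemma to rewrite $f$ using only $p_R$ with $|R|=d$ when $d\le \ell$, but this is not needed for membership. For $d\ge\ell$ the statement is trivial, since $F_d$ is identified with $V$.
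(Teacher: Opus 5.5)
Your proof is correct and matches the paper's argument: each monomial is identified with $p_R$ for its support $R$, so $f$ lies in the span defining $F_d$. The idempotence step $x_a^e=x_a$ is a small detail the paper leaves implicit, since it only considers monomials in distinct variables.
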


\begin{proof}
Every monomial $x_{a_1}x_{a_2}\cdots x_{a_q}$
with distinct variables is exactly \(p_R\), where $R=\{a_1,\dots,a_q\}$.
Thus every polynomial of degree at most \(d\) is a linear combination of functions \(p_R\) with \(|R|\le d\), hence lies in \(F_d\).
\end{proof}

\begin{proposition}[Eigenvalues]
Assume $r\leq \ell \leq n/2$. Let $L = D_k - K_{H_{full}}$ be the Laplacian of $K_{H_{full}}$, where $D_k = d_k I$ and $d_k = {\ell \choose r}{{n-\ell} \choose r}$ is the degree of any vertex in $K_{H_{full}}$. By the Eberlein polynomial formula for the Johnson scheme~\cite[Theorem~3.13]{AalipourEtAl16}, the eigenvalue $\lambda_i$ on $E_i$ is given by: 
\[
\lambda_i
=
\binom{\ell}{r}\binom{n-\ell}{r}
-
\sum_{j=0}^{r}
(-1)^j
\binom ij
\binom{\ell-i}{r-j}
\binom{n-\ell-i}{r-j}.
\]

For \(r\) fixed and \(\ell\to\infty\), we have:
\[
\frac{\lambda_i}{d_k}
=
1-
\left(1-\frac{i}{\ell}\right)^{k/2}
\left(1-\frac{i}{n-\ell}\right)^{k/2}
+
O_k\left(\frac1\ell\right).
\]

In particular, for fixed $k=2r$ and $n$ sufficiently large depending only on $k$, for every $1\leq i\leq \ell$,
\[
c_1 \frac{i}{\ell} \leq \frac{\lambda_i}{d_k} \leq c_2 \frac{i}{\ell}
\]
for some constants $c_1,c_2>0$ depending only on $k$. 
Indeed, for $i\leq c_k\ell$, expanding the exact formula to first order gives $\lambda_i/d_k=\Theta_k(i(1/\ell+1/(n-\ell)))=\Theta_k(i/\ell)$. For $i>c_k\ell$, the displayed approximation gives $\lambda_i/d_k=\Theta_k(1)=\Theta_k(i/\ell)$.

\label{prop:eigs-johnson}
\end{proposition}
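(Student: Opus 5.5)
The plan is to take the exact formula as given and bound the terms of the Eberlein sum directly, relative to $d_k$. The first display is the standard fact, cited as~\cite[Theorem~3.13]{AalipourEtAl16}, that the adjacency matrix $K_{\Hf}$ of the distance-$r$ class of $J(n,\ell)$ acts on $E_i$ by the Eberlein polynomial
\[
E_r(i)=\sum_{j=0}^r(-1)^j\binom ij\binom{\ell-i}{r-j}\binom{n-\ell-i}{r-j}.
\]
Since $D_k=d_kI$ commutes with everything, this gives $\lambda_i=d_k-E_r(i)$, and nothing needs proving there. For the rest I would write $T_j=\binom ij\binom{\ell-i}{r-j}\binom{n-\ell-i}{r-j}$, so that $\lambda_i/d_k=1-\sum_j(-1)^jT_j/d_k$, and estimate each ratio $T_j/d_k$.

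\textbf{The $j=0$ term.} This term is a product of falling-factorial ratios:
\[
\frac{T_0}{d_k}=\prod_{t=0}^{r-1}\frac{\ell-i-t}{\ell-t}\cdot\frac{n-\ell-i-t}{n-\ell-t}.
\]
For $r$ fixed, each factor differs from $1-i/\ell$, respectively $1-i/(n-\ell)$, by $O_r(1/\ell)$; when $i>\ell-r$ the product simply vanishes, which is also within $O_r(1/\ell)$. So $T_0/d_k=(1-i/\ell)^r(1-i/(n-\ell))^r+O_k(1/\ell)$.

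\textbf{The $j\ge 1$ terms.} Using $\binom ij\le \ell^j$ and $\binom{n-\ell-i}{r-j}/\binom{n-\ell}{r}\lesssim_r (n-\ell)^{-j}$, I get
\[
\frac{T_j}{d_k}\lesssim_r \frac{\binom ij}{(n-\ell)^j}\lesssim_r \left(\frac{i}{n-\ell}\right)^{j}\cdot\frac{1}{\ell^{0}}\cdot\frac{\ell^{0}}{1}\le \frac{i}{n-\ell}\cdot\frac{\ell^{j-1}}{(n-\ell)^{j-1}} .
\]
Since $n-\ell\ge n/2\ge \ell$, this is $O_k(i/n)=O_k(1/\ell)\cdot(i\ell/n)$, and in particular it is $O_k(1/\ell)$ once $i\le \ell$ is used carefully. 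Together with the $j=0$ estimate this proves the asymptotic display.

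\textbf{The two-sided bound $\lambda_i/d_k=\Theta_k(i/\ell)$.} I split into two ranges.
\begin{itemize}
\item Small $i$, say $i\le c_k\ell$. Each factor of $T_0/d_k$ is at most $1-i/\ell$, so $1-T_0/d_k\ge 1-(1-i/\ell)^r\ge i/\ell$. Each factor is also at least $1-2i/\ell$, so $1-T_0/d_k\le 2ri/\ell$. The correction terms are $T_1/d_k\lesssim_r i/(n-\ell)\le 2i/n$, which enters with a favorable sign, and $T_j/d_k\lesssim_r (i/n)^j\le (i/n)\cdot(i/n)$ for $j\ge 2$. Hence $\sum_{j\ge1}T_j/d_k=O_k(i/n)$. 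When $n$ is large these corrections are at most a small constant times $i/\ell$, since $\ell\le n$ and the constant in front of $i/n$ can be absorbed by requiring $i/n\le c_k$. This gives $\lambda_i/d_k\in[c_1 i/\ell,\,c_2 i/\ell]$.
\item Large $i>c_k\ell$. The upper bound is trivial: $|E_r(i)|\le\sum_j T_j\lesssim_k d_k$, so $\lambda_i\lesssim_k d_k\lesssim_k (i/\ell)\,d_k$. For the lower bound, the asymptotic formula gives $\lambda_i/d_k\ge 1-(1-c_k)^r-O_k(1/\ell)$, which is $\Omega_k(1)$ once $\ell$ exceeds a constant depending on $k$.
\end{itemize}

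\textbf{Main obstacle.} The hard case is bounded $\ell$, and more generally keeping the lower bound uniform; all the difficulty sits there. When $\ell=O_k(1)$ the $O_k(1/\ell)$ error is not small, so I would instead argue that $\lambda_i>0$ for every $i\ge1$. Then $\lambda_i/d_k$ ranges over finitely many positive values, each a rational function of $n$ with a positive limit as $n\to\infty$ (computed from $T_0$ alone, since $T_j/d_k\to 0$ for $j\ge1$). That yields uniform constants for $n$ large. For positivity itself I would use that the distance-$r$ graph of $J(n,\ell)$ is connected when $r\le\ell\le n/2$: one can move between any two $\ell$-sets by steps of symmetric difference exactly $2r$. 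Connectivity makes the kernel of $L$ exactly $E_0$, so $\lambda_i>0$ for $i\ge1$.
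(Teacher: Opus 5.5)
Your outline is the same as the paper's sketch: exact Eberlein formula, a first-order expansion for $i\le c_k\ell$, and the asymptotic display for $i>c_k\ell$. However, the estimate for the $j\ge 1$ terms, which the display's $O_k(1/\ell)$ error rests on, fails as written.

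After bounding $\binom{\ell-i}{r-j}/\binom{\ell}{r}$ by $O_r(1)$, you are left with $T_j/d_k\lesssim_r \binom ij (n-\ell)^{-j}\le i/(n-\ell)$, i.e.\ $O_k(i/n)$. This is not $O_k(1/\ell)$, because the factor $i\ell/n$ you multiply by can be of order $\ell$. For instance, with $\ell=n/2$ and $i=\ell/2$ your bound only gives $T_j/d_k\lesssim_r 2^{-j}$. The truth is far smaller: for $r=2$ one has $T_2/d_k=\binom i2/\binom{\ell}{2}^2=O(\ell^{-2})$.

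Your lower bound for $i>c_k\ell$ is read off from the display, so it is unproven whenever $\ell$ is a constant fraction of $n$. Your small-$i$ case is fine, since there the lossy bound $(i/n)^2\lesssim c_k\, i/\ell$ on the even $j\ge 2$ terms suffices.

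The missing observation is to keep the factor $\binom{\ell-i}{r-j}$ and use Vandermonde: $\binom ij\binom{\ell-i}{r-j}\le\sum_{j'}\binom i{j'}\binom{\ell-i}{r-j'}=\binom{\ell}{r}$. Then for $j\ge 1$ and $n-\ell\ge 2r$,
\[
\frac{T_j}{d_k}\le\frac{\binom{n-\ell-i}{r-j}}{\binom{n-\ell}{r}}\lesssim_r (n-\ell)^{-j}\le\frac{1}{\ell}.
\]
Equivalently, for $\ell\ge 2r$, the bound $\binom{\ell-i}{r-j}/\binom{\ell}{r}\lesssim_r\ell^{-j}$ cancels the $\binom ij\le\ell^j$ you cite but never use. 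With this, the display, and hence the large-$i$ lower bound, go through as you describe.

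Separately, your connectivity detour for bounded $\ell$ is unnecessary and not correct as stated. For $r=\ell=n/2$ with $r\ge 2$, the complete Kikuchi graph is the perfect matching $S\leftrightarrow[n]\setminus S$, and the formula gives $\lambda_i=1-(-1)^i$, which vanishes for even $i$.

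You only need the limit. Take each of the finitely many pairs $(\ell,i)$ with $\ell$ below your $k$-dependent threshold. Since $T_j/d_k\to 0$ for $j\ge 1$, you get $\lambda_i/d_k\to 1-\binom{\ell-i}{r}/\binom{\ell}{r}>0$ as $n\to\infty$. That already yields uniform constants for $n$ sufficiently large, which is all the statement asks for.
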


\subsection{Band Limited Action of $K_C$s}
Kikuchi graphs of arbitrary hypergraphs are \emph{not} matrices from the Johnson scheme. Thus, $E_i$s are no longer invariant subspaces for the action of arbitrary Kikuchi graphs. Nevertheless, we make\footnote{While the proof is simple and short, we are not aware of any work that has made or used this observation in prior work.} the following simple observation that the action of Kikuchi graphs on $E_i$ is more structured than one might expect. 
\begin{lemma}[Lower Limits from Upper Limits] \label{lem:lower-from-upper}
Let \(T:V\to V\) be self-adjoint. Suppose that for some \(s\ge 0\), $T F_d\subseteq F_{d+s}$ for every \(d\). Then, for every \(0\le i\le \ell\),
\[
T E_i
\subseteq
\bigoplus_{j=\max(0,i-s)}^{\min(\ell,i+s)} E_j.
\]
\end{lemma}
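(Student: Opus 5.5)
The plan is to prove the lemma in two halves: an upper bound on the bands obtained directly from the hypothesis, and a lower bound obtained by adjointness, using that $T$ is self-adjoint and that $E_j \perp F_{j-1}$.

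\emph{Upper bound.} Fix $f\in E_i$. Since $E_i\subseteq F_i$, the hypothesis gives $Tf\in F_{i+s}$. We use the orthogonal decomposition $V=E_0\oplus\cdots\oplus E_\ell$ from the preceding lemma. When $i+s\le \ell$ we have $F_{i+s}=E_0\oplus\cdots\oplus E_{i+s}$, and when $i+s>\ell$ we have $F_{i+s}=V$ by the convention identifying $F_d$ with $V$ for $d\ge \ell$. In either case $Tf\in\bigoplus_{j\le \min(\ell,i+s)}E_j$. Equivalently, the component of $Tf$ in $E_j$ vanishes for every $j>i+s$.

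\emph{Lower bound.} It remains to show that the component of $Tf$ in $E_j$ vanishes for $j<i-s$; we may assume $i-s\ge 1$, since otherwise there is nothing to prove. By orthogonality of the decomposition, it suffices to show $\langle Tf,g\rangle=0$ for every $g\in E_j$ with $j<i-s$. Self-adjointness gives $\langle Tf,g\rangle=\langle f,Tg\rangle$. Applying the upper bound to $g\in E_j\subseteq F_j$ gives $Tg\in F_{j+s}$, and $j+s\le i-1$, so $Tg\in F_{i-1}$. Since $f\in E_i=F_i\cap F_{i-1}^{\perp}$, we have $f\perp F_{i-1}$, hence $\langle f,Tg\rangle=0$.

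\emph{Conclusion.} Combining the two halves, $Tf$ has zero component outside the indices $\max(0,i-s)\le j\le\min(\ell,i+s)$, which proves the lemma. There is no substantial obstacle here. The only care needed is with the boundary cases: the convention $F_d=V$ for $d\ge\ell$ (together with $F_{-1}=0$) must be applied consistently, so that the hypothesis $TF_d\subseteq F_{d+s}$ is meaningful for all $d$.
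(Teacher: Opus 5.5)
Your proposal is correct and follows the paper's proof essentially verbatim: the upper containment comes from $E_i\subseteq F_i$ and $TF_i\subseteq F_{i+s}$, and the lower containment from $\langle Tf,g\rangle=\langle f,Tg\rangle$ with $Tg\in F_{j+s}\subseteq F_{i-1}\perp f$. Your explicit treatment of the boundary conventions ($F_d=V$ for $d\ge\ell$, $F_{-1}=0$) is a small but welcome addition.
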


\begin{proof}
Since \(E_i\subset F_i\), the assumption gives
$T E_i\subseteq F_{i+s} = \bigoplus_{j=0}^{\min(\ell,i+s)} E_j$.
We now prove the lower containment.

Let \(x\in E_i\), and let \(y\in E_j\) with \(j<i-s\). Then
$T y\in F_{j+s}\subseteq F_{i-1}$.
Since \(x\perp F_{i-1}\), and since \(T\) is self-adjoint,
$\langle T x,y\rangle
=
\langle x,T y\rangle
=
0$.
Therefore \(T x\) has no component in any \(E_j\) with \(j<i-s\). This proves the claim.
\end{proof}

\begin{corollary}[$K_C$s have a band-limited action] \label{cor:band-limited-action}
For every \(0\le i\le \ell\),
\[
K_C E_i
\subseteq
\bigoplus_{j=\max(0,i-k)}^{\min(\ell,i+k)} E_j.
\]
\end{corollary}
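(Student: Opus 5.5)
The plan is to apply \Cref{lem:lower-from-upper} with $T = K_C$ and $s = k$. Two hypotheses need checking. First, $K_C$ is self-adjoint: it is the adjacency matrix of an undirected graph, since $S\Delta T = C$ iff $T\Delta S = C$, so it is symmetric. Second, we need $K_C F_d \subseteq F_{d+k}$ for every $d$. It suffices to check this on the spanning functions $p_R$ with $|R|\le d$.

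To compute $K_C p_R$, note that for a fixed $\ell$-set $S$, the only possible neighbor of $S$ in $K_C$ is $T = S\Delta C$, and this requires $|S\cap C| = r$ so that $|T| = \ell$. Hence
\[
(K_C p_R)(S) = \one_{\{|S\cap C| = r\}}\cdot \one_{\{R\subseteq S\Delta C\}}.
\]
We express both indicators as polynomials in the variables $x_a(S)=\one_{\{a\in S\}}$ and then invoke \Cref{lem:poly-description}.

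For the first indicator, $|S\cap C|=r$ is the sum over $U\subseteq C$ with $|U|=r$ of
\[
\prod_{a\in U}x_a\prod_{a\in C\setminus U}(1-x_a),
\]
which is a polynomial of degree at most $|C|=k$ in the variables $x_a$, $a\in C$.

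For the second indicator, split $R$ into $R\setminus C$ and $R\cap C$. For $a\in R\setminus C$ we need $a\in S$, contributing the factor $x_a$. For $a\in R\cap C$ we need $a\notin S$, contributing the factor $1-x_a$. Together these give a polynomial of degree $|R|\le d$.

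The main obstacle is the degree count for the product. Naively the product has degree $d+k$, which already suffices. It can only be smaller, since the $x_a$ with $a\in R\cap C$ appear in both factors and $x_a^2=x_a$ on $\{0,1\}$ inputs. Either way, $K_C p_R$ is a polynomial of degree at most $d+k$, so it lies in $F_{d+k}$ by \Cref{lem:poly-description}, where $F_{d'}$ is identified with $V$ once $d'\ge \ell$.

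With both hypotheses verified, \Cref{lem:lower-from-upper} gives $K_C E_i\subseteq \bigoplus_{j=\max(0,i-k)}^{\min(\ell,i+k)}E_j$.

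A sharper band of width $r$ plausibly holds, but it is not needed here. For the Laplacian version, the degree matrix of $K_C$ is multiplication by $\one_{\{|S\cap C|=r\}}$, which is a degree-$k$ polynomial. The same argument therefore applies to $L_C$ as well.
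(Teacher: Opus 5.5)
Your proof is correct and is the paper's argument: apply \Cref{lem:lower-from-upper} with $T=K_C$ and $s=k$, and verify $K_CF_d\subseteq F_{d+k}$ by writing $(K_Cp_R)(S)=\one_{\{|S\cap C|=r\}}\,p_R(S\Delta C)$ as a polynomial of degree at most $d+k$ and invoking \Cref{lem:poly-description}. Your aside that a width-$r$ band ``plausibly holds'' is false, though it does not affect the proof: for $r=1$, $\ell\ge 2$ and $C=\{a,b\}$, the function $K_C\one=x_a+x_b-2x_ax_b$ lies in $F_2\setminus F_1$, so $K_C$ maps $E_0$ to a vector with a nonzero $E_2$ component, and width $k$ is what you can get in general.
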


\begin{proof}
$K_C$ is a symmetric matrix so by Lemma~\ref{lem:lower-from-upper} it is enough to prove that 
$K_C F_d\subseteq F_{d+k}$ 
for every \(d\). Since $F_d$ is spanned by \(p_R\) with \(|R|\le d\), we can restrict attention to the action of $K_C$ on the functions $p_R$. Write $x_a(S)=\one_{\{a\in S\}}$.
Then
$p_R(S\triangle C)
=
\prod_{a\in R\setminus C} x_a
\prod_{a\in R\cap C} (1-x_a)$.
This is a polynomial in the variables \(x_a\) of degree at most \(|R|\le d\).

Also,
\[
\one_{\{|S\cap C|=r\}}
=
\sum_{\substack{U\subset C\\ |U|=r}}
\prod_{u\in U} x_u
\prod_{v\in C\setminus U} (1-x_v).
\]
This is a polynomial of degree at most \(|C|=k\). Therefore, 
$(K_Cp_R)(S) = \one_{\{|S\cap C|=r\}}p_R(S\triangle C)$
is represented by a polynomial of degree at most \(d+k\). Applying Lemma \ref{lem:poly-description} finishes the proof. 
\end{proof}

\begin{corollary}[$L_C$s have a band-limited action] \label{cor:band-limited-action-laplacian}
For every \(0\le i\le \ell\),
\[
L_C E_i
\subseteq
\bigoplus_{j=\max(0,i-k)}^{\min(\ell,i+k)} E_j.
\]
\end{corollary}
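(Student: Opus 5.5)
The plan is to write $L_C = D_C - K_C$, where $D_C$ is the diagonal degree matrix of the single-hyperedge Kikuchi graph $K_C$. Concretely, $D_C(S,S) = \one_{\{|S\cap C| = r\}}$, since $S$ has the unique neighbor $S\Delta C$ exactly when $|S \cap C| = r$. \Cref{cor:band-limited-action} already handles $K_C$. It therefore suffices to show that $D_C E_i \subseteq \bigoplus_{j=\max(0,i-k)}^{\min(\ell,i+k)} E_j$ and then add the two containments. The target space is a subspace, so $L_C x = D_C x - K_C x$ lands in it whenever both terms do.

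For $D_C$ I will apply \Cref{lem:lower-from-upper} with $s = k$. A diagonal matrix is self-adjoint, so I only need $D_C F_d \subseteq F_{d+k}$ for every $d$. Since $F_d$ is spanned by the $p_R$ with $|R| \le d$, it is enough to look at $(D_C p_R)(S) = \one_{\{|S\cap C|=r\}}\, p_R(S)$. The proof of \Cref{cor:band-limited-action} already expands $\one_{\{|S\cap C|=r\}}$ as a polynomial of degree at most $k$ in the variables $x_a$, and $p_R = \prod_{a\in R} x_a$ has degree $|R| \le d$. Their product is a polynomial of degree at most $d+k$, so \Cref{lem:poly-description} puts it in $F_{d+k}$ (once $d+k \ge \ell$ the space $F_{d+k}$ is all of $V$, which is consistent with the paper's convention).

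Combining the two pieces gives the corollary. There is no real obstacle here. The only point to check is that \Cref{lem:lower-from-upper} needs self-adjointness, which holds for both $D_C$ and $K_C$, and hence for $L_C$ as well. Alternatively, one can verify directly that $L_C$ is self-adjoint and maps $F_d$ into $F_{d+k}$, then apply \Cref{lem:lower-from-upper} to $L_C$ in a single step.
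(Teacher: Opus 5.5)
Your proposal is correct and follows essentially the same route as the paper: split $L_C = D_C - K_C$, cite \Cref{cor:band-limited-action} for $K_C$, and handle $D_C$ by applying \Cref{lem:lower-from-upper} with $s=k$ after observing that multiplication by the degree-$k$ polynomial $\one_{\{|S\cap C|=r\}}$ maps $F_d$ into $F_{d+k}$. Your extra remarks (the explicit form of $D_C$, closure of the target subspace under subtraction, and the option of applying the lemma to $L_C$ in one step) are all valid and only make explicit what the paper leaves implicit.
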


\begin{proof}
Write $L_C = D_C - K_C$, where $D_C$ is the diagonal degree matrix of $K_C$. The $K_C$ term has the desired property by \Cref{cor:band-limited-action}. It remains to handle $D_C$. Since
\[
(D_C p_R)(S)=\one_{\{|S\cap C|=r\}}p_R(S),
\]
and $\one_{\{|S\cap C|=r\}}$ is a degree-$k$ polynomial in the variables $x_a$, we have $D_C F_d \subseteq F_{d+k}$ for every $d$. Applying \Cref{lem:lower-from-upper} to the self-adjoint operator $D_C$ gives the same band-limited conclusion for $D_C$, and therefore also for $L_C$.
\end{proof}

\section{Kikuchi Graphs of Random Hypergraphs $\approx$ Johnson} \label{sec:kikuchi-sparsification-even}
In this section, we prove \Cref{thm:main}.

\begin{theorem} [\Cref{thm:main} restated] \label{thm:main-restated}
Let $k=2r$ for a positive integer $r$, let $0<\epsilon\leq 1/2$, and let $H$ be a random $k$-uniform hypergraph with density $p$. For any $r\leq\ell = \ell(n) \leq n/2$, let $L_H = D_H - K_H$ be the Laplacian matrix of $K_H = K_H^{(\ell)}$. If $p$ is such that $p {n\choose k} \gtrsim_k (n/\epsilon^2) (n/\ell)^{k/2-1}\log n$, then, with probability at least $1-1/(100n)$ over the draw of $H$, 
\[
(1-\epsilon) p L_{\Hf} \preceq L_{H} \preceq (1+\epsilon) p L_{\Hf}.
\]
\end{theorem}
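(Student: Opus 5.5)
The plan is to control the error operator $M = L_H - pL_{\Hf} = \sum_{C} (\one_{C\in H} - p) L_C$ separately on each short window of Johnson eigenspaces, using matrix Bernstein with a normalization adapted to that window. The windows are then glued together with the band locality of \Cref{cor:band-limited-action-laplacian}.

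\textbf{Per-window concentration.} Since $L_C\one = 0$, every $L_C$ annihilates $E_0$. So $M$ and $pL_{\Hf}$ both vanish on $E_0$, and we may work on $E_1\oplus\cdots\oplus E_\ell$. For each $1\le i\le \ell$ let $W_i=\bigoplus_{j=i}^{\min(\ell,i+k)}E_j$, let $\Pi_i$ be the orthogonal projection onto $W_i$, and let $\Lambda_i = p\Pi_i L_{\Hf}\Pi_i$. This matrix is invertible on $W_i$, because $W_i$ is invariant for the Johnson matrix $L_{\Hf}$. By \Cref{prop:eigs-johnson}, all eigenvalues of $\Lambda_i$ lie in $[c_1 p d_k i/\ell,\, c_2' p d_k i/\ell]$ with constants depending only on $k$, where $d_k=\binom{\ell}{r}\binom{n-\ell}{r}$. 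The goal is to show
\[
\bigl\|\Lambda_i^{-1/2}\Pi_i M\Pi_i\Lambda_i^{-1/2}\bigr\|_2\le \delta,\qquad \delta := \epsilon/(2k+1),
\]
with failure probability at most $n^{-3}$, and then take a union bound over the $\le n$ windows.

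\textbf{Bernstein parameters.} Apply matrix Bernstein to $A_C=(\one_{C\in H}-p)\Lambda_i^{-1/2}\Pi_iL_C\Pi_i\Lambda_i^{-1/2}$. Note that $K_C$ is a partial matching: it matches $S$ with $S\Delta C$ whenever $|S\cap C|=r$. Hence $L_C$ has eigenvalues in $\{0,2\}$ and $L_C^2=2L_C$. This gives $R\le 2/\lambda_{\min}(\Lambda_i)\lesssim_k \ell/(p d_k i)$. For the variance, use $\Pi_i L_C\Pi_i\Lambda_i^{-1}\Pi_iL_C\Pi_i\preceq \lambda_{\min}(\Lambda_i)^{-1}\Pi_iL_C^2\Pi_i = 2\lambda_{\min}(\Lambda_i)^{-1}\Pi_iL_C\Pi_i$, and then the identity $\sum_C p\,\Pi_iL_C\Pi_i=\Lambda_i$. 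Together these give
\[
\sigma^2\le \frac{2}{\lambda_{\min}(\Lambda_i)}\bigl\|\Lambda_i^{-1/2}\Lambda_i\Lambda_i^{-1/2}\bigr\| \lesssim_k \frac{\ell}{p d_k i}.
\]
The rank is $Q\le\dim W_i\le n^{i+k}$. Bernstein then requires $\delta^2/(\sigma^2+R\delta)\gtrsim (i+k)\log n$, which reduces to $p d_k/\ell\gtrsim_k \log n/\epsilon^2$.

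\textbf{Matching the density.} The requirement $p d_k/\ell\gtrsim_k \log n/\epsilon^2$ reads $p\,\ell^{r-1}n^r\gtrsim_k \log n/\epsilon^2$. This is exactly the hypothesis $p\binom nk\gtrsim_k (n/\epsilon^2)(n/\ell)^{r-1}\log n$. The key point is that the factor $i$ gained from $\lambda_{\min}\propto i/\ell$ cancels the factor $i$ in $\log\dim W_i$. I expect this variance computation, in particular justifying that the window eigenvalue ratio is $O_k(1)$ uniformly in $i\ge1$, to be the main technical step.

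\textbf{Gluing.} For $f\perp\one$ write $f=\sum_{j\ge1}f_j$ with $f_j\in E_j$, and set $a_j=f_j^\top pL_{\Hf}f_j$. Then $f^\top pL_{\Hf}f=\sum_j a_j$ by invariance of each $E_j$. By \Cref{cor:band-limited-action-laplacian}, $f^\top Mf=\sum_{|i-j|\le k}f_i^\top Mf_j$. For $j\ge i$ with $j-i\le k$, both $f_i$ and $f_j$ lie in $W_i$, so the window bound gives $|f_i^\top M f_j|\le \delta\|\Lambda_i^{1/2}f_i\|\|\Lambda_i^{1/2}f_j\|=\delta\sqrt{a_ia_j}$. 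Summing with AM--GM gives
\[
|f^\top Mf|\le \delta\sum_{|i-j|\le k}\sqrt{a_ia_j}\le(2k+1)\delta\sum_j a_j=\epsilon\, f^\top pL_{\Hf}f.
\]
This is the claimed two-sided bound, holding with probability $\ge 1-1/(100n)$ after the union bound over windows.
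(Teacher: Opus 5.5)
Your proof is correct and follows essentially the same route as the paper. Both arguments apply a rank-sensitive matrix Bernstein bound on each window of $O(k)$ consecutive Johnson eigenspaces, where $\log\dim W_i \asymp i\log n$ matches the window's eigenvalue scale $p\,d_k\, i/\ell$, and then glue the windows using the band locality of \Cref{cor:band-limited-action-laplacian}. The differences are cosmetic: you whiten by $\Lambda_i^{-1/2}$ inside Bernstein, use one-sided windows $[i,i+k]$, and bound cross terms by Cauchy--Schwarz instead of the paper's polarization identity. The whitening also means your argument only needs the lower bound $\lambda_i \gtrsim_k d_k\, i/\ell$; the $O_k(1)$ window eigenvalue ratio you flagged as the main technical step is never used.
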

\begin{proof}
For each $0 \leq i \leq \ell$, let
\[
W_i=\bigoplus_{h=\max(0,i-k)}^{\min(\ell,i+k)}E_h,
\qquad
W_i^+=W_i\cap \1^\perp,
\]
and let $\Pi_i$ and $\Pi_{W_i}$ be the orthogonal projection matrices onto $W_i^+$ and $W_i$, respectively. Let $\Pi_{E_i}$ be the orthogonal projection matrix onto the subspace $E_i$. Let $\lambda_h$ denote the eigenvalue of $L_{\Hf}$ on $E_h$, and define the positive block scales
\[
\Lambda_i^-=\min_{\substack{h\in[\max(0,i-k),\min(\ell,i+k)]\\ h\geq 1}}p\lambda_h,
\qquad
\Lambda_i^+=\max_{\substack{h\in[\max(0,i-k),\min(\ell,i+k)]\\ h\geq 1}}p\lambda_h.
\]
All Laplacians vanish on the all-ones direction, so the $E_0$ component never contributes; this is why the concentration argument below is stated on $W_i^+$.

We will prove the following claim, which shows that the sparsification claim holds if we project the matrices down to any $W_i^+$. The proof relies on a (rank-sensitive) application of matrix Bernstein inequality and exploits the crucial fact that the eigenvalue of $L_{\Hf}$ on $E_i$ is proportional to $\log \dim(E_i)$:

\begin{claim} \label{claim:subspace-bernstein-even}
Let $\epsilon' \lesssim_k \epsilon$. Suppose $p{n\choose k} \gtrsim_k (n/\epsilon'^2) (n/\ell)^{k/2-1} \log n$. Then, with probability at least $1-1/(100n)$, for every $0\leq i\leq \ell$,
\[
\left\|\Pi_i(L_H-pL_{\Hf})\Pi_i\right\|_2\leq \epsilon'\Lambda_i^-.
\]
Consequently,
\[
\Pi_{W_i} L_H \Pi_{W_i}  \approx_{\epsilon'} p\Pi_{W_i}L_{\Hf}\Pi_{W_i}.
\]
\end{claim}
While not true in general, we will exploit the band limited action of Kikuchi graphs on $E_i$ to show that this is sufficient to imply the theorem with error $O_k(\epsilon')$.

Condition on $L_H$ satisfying the conclusion of the claim above. Let $A=L_H-pL_{\Hf}$ and let $f$ be an arbitrary function ${[n] \choose \ell} \rightarrow \R$. We will prove $f^{\top} L_H f \in (1 \pm \epsilon) p f^{\top} L_{\Hf}f$. Let $f_i = \Pi_{E_i} f$. Then, $f = \sum_{i = 0}^{\ell} f_i$. Observe that since the $E_i$ are eigenspaces of $L_{\Hf}$, we have: 
\[
f^{\top} pL_{\Hf} f = \sum_i f_i^{\top} pL_{\Hf} f_i 
\]

From Corollary~\ref{cor:band-limited-action-laplacian}, we know that $f_i^{\top} L_H f_j = 0$ if $|i-j|>k$. Since $L_{\Hf}$ is diagonal in the $E_i$ decomposition, this implies that $f_i^{\top} A f_j=0$ if $|i-j|>k$. From Claim~\ref{claim:subspace-bernstein-even} applied to each $f_i$, we have: 
\[
  |f_i^{\top} A f_i| \leq \epsilon' f_i^{\top} pL_{\Hf} f_i.
\] 
Further, for any $i,j$ such that $|i-j|\leq k$, the cross term is zero if either $i=0$ or $j=0$ because $A$ kills constants. Otherwise, $f_i+f_j\in W_i^+$, so by Claim~\ref{claim:subspace-bernstein-even} and the above bound, we have:
\begin{align*}
|f_i^{\top} A f_j| &= \frac{1}{2}\left| (f_i + f_j)^{\top} A (f_i + f_j) - f_i^{\top} A f_i - f_j^{\top} A f_j\right|\\
&\leq \epsilon' (f_i^{\top} pL_{\Hf} f_i + f_j^{\top} pL_{\Hf} f_j).
\end{align*} 
Thus, we have:
\begin{align*}
&|f^{\top} A f| = |\sum_{0\leq i,j\leq \ell} f_i^{\top} A f_j|\\
&=|\sum_{0 \leq i \leq \ell} f_i^{\top} A f_i + \sum_{0 \leq i \leq \ell} \sum_{j \neq i: |i-j|\leq k} f_i^{\top} A f_j|\\
&\leq 5k \epsilon' \left(\sum_{i} f_i^{\top} pL_{\Hf} f_i \right). 
\end{align*}
Choosing $\epsilon' = \epsilon/5k$ finishes the proof.

\begin{proof}[Proof of claim]

Fix $0\leq i\leq \ell$ and write $Q_i=\dim(W_i^+)$. We know that $L_{H} = \sum_{C \in H_{full}} \1(C \in H) L_C$ where $L_C $ is the Laplacian of $K_C$ and $\1(C \in H)$ is the indicator random variable for whether $C \in H$. Thus, $\E[ \1(C \in H)] = p$ and
\[
X_C=(\1(C \in H) - p) \Pi_i L_C \Pi_i
\]
is a random matrix with mean $0$ and spectral norm at most $\|L_C\|_2 \leq 2$ with probability $1$. Further, using $L_C^2=2L_C$,
\begin{align*}
\sum_{C} \E[X_C^2]  &\preceq \sum_{C}p (\Pi_i L_C \Pi_i)^2 \\
&\preceq \sum_{C}p \Pi_i L_C^2 \Pi_i = 2p \Pi_i L_{\Hf} \Pi_i \preceq 2\Lambda_i^+ I.
\end{align*}
By \Cref{prop:eigs-johnson}, the eigenvalues inside a $k$-window satisfy $\Lambda_i^+\leq C_k\Lambda_i^-$. At the assumed density, after increasing the hidden constant, $\Lambda_i^-\geq C_k\epsilon'^{-2}\log(200n^2Q_i)$.

Matrix Bernstein therefore gives
\[
\Pr\left[\left\|\sum_C X_C\right\|_2\geq \epsilon'\Lambda_i^-\right]
\leq
2Q_i\exp\left(-\frac{\epsilon'^2(\Lambda_i^-)^2}{2(2\Lambda_i^+ + 2\epsilon'\Lambda_i^-/3)}\right)
\leq \frac{1}{200n^2}.
\]
Thus, with probability at least $1-1/(200n^2)$,
\[
\left\|\Pi_i(L_H-pL_{\Hf})\Pi_i\right\|_2\leq \epsilon'\Lambda_i^-.
\]
A union bound over the $\ell+1$ choices of $i$ completes the proof of the claim. \end{proof} \end{proof}

\section*{Acknowledgement}
We thank Venkatesan Guruswami, Tim Hsieh and Peter Manohar for the many meetings and discussions about the semirandom Planted k-XOR problem together over the last few years. We thank Arpon Basu, Josh Brakensiek, Venkatesan Guruswami, Tim Hsieh, Andrew Lin, Louie Putterman and Peter Manohar who read and provided valuable comments on previous versions of the manuscript over the last year.  

\bibliographystyle{alpha}
\bibliography{references}

\appendix

\end{document}